\documentclass[orivec]{llncs}

\usepackage[T1]{fontenc}
\usepackage{hyperref}
\usepackage{amsmath}
\usepackage{amssymb}
\usepackage{graphicx}
\graphicspath{{fig/}}
\usepackage[usenames,dvipsnames]{color} 
\usepackage{mdwmath}
\usepackage{verbatim}
\usepackage{amssymb}
\usepackage{tikz}
\usepackage{stmaryrd}

\pagestyle{plain}

\newcommand{\server}{\ensuremath{\texttt{S}}}

\newcommand{\OR}{\ensuremath{\mathtt{OR}}}
\newcommand{\AND}{\ensuremath{\mathtt{AND}}}
\newcommand{\SAND}{\ensuremath{\mathtt{SAND}}}
\newcommand{\OAND}{\ensuremath{\mathtt{OAND}}}

\newcommand{\basicact}{\ensuremath{\mathbb{B}}}
\newcommand{\atree}{\ensuremath{\mathbb{T}}}
\newcommand{\sandtree}{\ensuremath{\mathbb{T}_{\SAND}}}
\newcommand{\sandtreevar}{\ensuremath{\sandtree^{\var}}}
\newcommand{\atreevar}{\ensuremath{\atree^{\var}}}

\newcommand{\power}[1]{\ensuremath{\mathcal{P}(#1)}}

\newcommand{\msl}{\{\hspace*{-0.1cm}|}
\newcommand{\msr}{|\hspace*{-0.1cm}\}}

\newcommand{\semantic}[1]{[\hspace{-0.05cm}[#1]\hspace{-0.05cm}]}
\newcommand{\sem}[1]{[\hspace{-0.05cm}[#1]\hspace{-0.05cm}]_{\mathcal{S\!P}}}
\newcommand{\msem}[1]{[\hspace{-0.05cm}[#1]\hspace{-0.05cm}]_{\mathcal{M}}}

\newcommand{\powerset}{\mathcal{P}}
\newcommand{\seqop}{\cdot}
\newcommand{\parop}{\parallel}

\newcommand{\spset}{\ensuremath{\mathbb{G}_{\mathcal{S\!P}}}}

\newcommand{\var}{\mathbb{V}} 
\newcommand{\ESP}{E_{\mathcal{S}\mathcal{P}}}
\newcommand{\ESM}{E_{\mathcal{M}}}
\newcommand{\EADT}{E_{ADT}}
\newcommand{\thetrs}{R_{\mathcal{S}\mathcal{P}}}

\newcommand{\mi}[1]{\mathit{#1}} 
\newcommand{\hy}[1]{\text{-}} 
\newcommand{\becomeroot}{\ensuremath{\textit{become root}}}
\newcommand{\noauth}{\ensuremath{\textit{no-auth}}}
\newcommand{\auth}{\ensuremath{\textit{auth}}}

\newcommand{\ftprhostsx}{\ensuremath{\textit{ftp}}}
\newcommand{\rshx}{\ensuremath{\textit{rsh}}}
\newcommand{\localbofx}{\ensuremath{\textit{lobf}}}
\newcommand{\sshbofx}{\ensuremath{\textit{ssh}}}
\newcommand{\rsarefbofx}{\ensuremath{\textit{rsa}}}

\newcommand{\fullattr}{\ensuremath{A_\alpha}}
\newcommand{\attr}{\ensuremath{\alpha}}
\newcommand{\attrdomain}{\ensuremath{D}}
\newcommand{\attrval}{\ensuremath{V}}

\newcommand{\attrand}{\ensuremath{\vartriangle}} 
\newcommand{\attror}{\ensuremath{\triangledown}} 
\newcommand{\attrsand}{\ensuremath{\Diamond}} 

\newcommand{\basicassign}{\ensuremath{\beta}}

\newcommand{\nfterm}{\ensuremath{\mi{N}}}
\newcommand{\asterm}{\ensuremath{\mi{C}}}
\newcommand{\sandterm}{\ensuremath{\mi{S}}}
\newcommand{\andterm}{\ensuremath{\mi{A}}}
\newcommand{\nfterms}{\ensuremath{\mathbb{T}_{\nfterm}}}
\newcommand{\asterms}{\ensuremath{\mathbb{T}_{\asterm}}}
\newcommand{\andterms}{\ensuremath{\mathbb{T}_{\andterm}}}
\newcommand{\sandterms}{\ensuremath{\mathbb{T}_{\sandterm}}}

\newcommand{\Xvec}{\overline{X}}
\newcommand{\Yvec}{\overline{Y}}
\newcommand{\Zvec}{\overline{Z}}

\newcommand{\norm}[1]{\lvert #1 \rvert}

\newcommand{\eg}{e.g.}
\newcommand{\ie}{i.e.}

\begin{document}

\title{{Attack Trees with Sequential Conjunction}\thanks{This is 
an extended version of~\cite{IFIP-SEC'15}.}}
\author{Ravi Jhawar\inst{1}, Barbara Kordy\inst{2}, Sjouke Mauw\inst{1}, 
Sa\v{s}a Radomirovi\'c\inst{3},\\ Rolando Trujillo-Rasua\inst{1}}
\institute{University of Luxembourg, SnT, Luxembourg\\
\and INSA Rennes, IRISA, France\\
\and Inst.~of Information Security, Dept.~of Computer Science,
ETH Z\"{u}rich, Switzerland
}

\maketitle

\begin{abstract}
We provide the first formal foundation of 
\SAND~attack trees which are a popular extension of the
well-known attack trees. 
The \SAND~attack tree formalism increases the expressivity of 
attack trees by introducing the sequential conjunctive operator $\SAND$. This
operator enables the modeling of ordered events.

We give a semantics to \SAND~attack trees by interpreting them as 
sets of 
series-parallel graphs and propose a complete axiomatization of this semantics. 
We define normal forms for \SAND~attack trees and a term rewriting system
which allows identification of semantically equivalent 
trees. 
Finally, we formalize how to
quantitatively analyze \SAND~attack trees using attributes.

\keywords{Attack trees, security modeling, sequential operators, \SAND} 
\end{abstract}

\section{Introduction}
\label{sec:introduction}

Attack trees allow for an effective security analysis by systematically organizing the different 
ways in which a system can be attacked into a tree. 
The root node of an attack tree 
represents the {\em attacker's goal} and the children of a given node represent 
its refinement into {\em sub-goals}. A refinement is typically either disjunctive 
(denoted by $\OR$) or conjunctive (denoted by $\AND$). 
The 
leaves of an attack tree represent the attacker's actions and are called {\em basic 
actions}. 

Since their inception by Schneier~\cite{Schn}, attack trees
have quickly become a popular modeling tool for security 
analysts.
However, the limitations of this formalism,
in particular with respect to expressing the order in which the
various attack steps are executed, have been recognized by
many authors (see \eg,~\cite{KoPiSc_CSR14}).
In practice, modeling of security scenarios often requires constructs
where conditions on the execution order of the attack components can
be clearly specified. This is for instance the case when the time or
(conditional) probability 
of an attack is considered, as
in~\cite{Arnold-POST14,LL2011}. Consequently, several studies have
extended attack trees informally with sequential conjunctive
refinements. Such extensions have resulted in improved modeling and
analyses (e.g., \cite{LL2011,PB2010,JW2009}) and
software tools, e.g., ATSyRA~\cite{pinchinat:hal-01064645}.

Even though the sequential conjunctive refinement, 
that we denote by \SAND, 
is well understood at a conceptual level 
and even applied to real world scenarios~\cite{pinchinat:hal-01064645}, 
none of the existing solutions have provided 
a rigorous mathematical formalization of attack trees with \SAND. 
Indeed, the extensions found in the literature 
are rather diverse in terms of application domain, interpretation, and 
formality. Thereby, it is infeasible
 to answer fundamental questions such as: 
What is the precise expressibility of \SAND~attack trees?
When do two such trees 
represent the same security 
scenario? Or what type of attributes can be synthesized 
on \SAND~attack trees in the standard bottom-up way?
These questions can only be precisely answered if \SAND~attack trees are 
provided with a formal, general, and explicit interpretation, that is to say, 
if \SAND~attack trees are given a formal foundation. 

\noindent{\em Contributions:} 
In this article we formalize 
the meaning of a $\SAND$ attack tree by defining its semantics. Our 
semantics 
is based on series-parallel (SP) graphs, which is a well-studied branch of 
graph theory. We provide a complete 
axiomatization 
for the SP semantics and show that the 
SP semantics for $\SAND$ attack trees are a conservative extension of the 
multiset semantics for standard attack trees~\cite{MaOo} (i.e., our 
extension 
does not 
introduce unexpected equivalences w.r.t.\ the multiset semantics). 
To do so, 
we define a term 
rewriting system that is terminating and confluent
and obtain normal forms for $\SAND$ attack trees.
As a consequence, we achieve the rather surprising result that the
domains of $\SAND$ attack trees
and sets of SP graphs are isomorphic. 
We also extend the notion of attributes 
for $\SAND$ attack trees which enable 
the quantitative analysis of attack scenarios using the standard bottom-up 
evaluation algorithm. 

{
One of the goals of our work is to provide a level of abstraction that
encompasses most of the existing approaches from literature. For example,
operators, 
such as the priority-based and the
time-based connectors~\cite{WaWhPhPa}, are indeed captured by the
$\SAND$ operator defined in this article. Moreover, other published
semantics, such as those based on cumulative distribution
functions~\cite{Arnold-POST14}, conditional
probabilities~\cite{WaWhPhPa}, or boolean algebra~\cite{Khan}, can be
expressed as an attribute in our formalism. Last but not least, even
though we make the distinction between $\AND$ and $\SAND$ refinements
explicit, our semantics satisfies backward compatibility with the
well-known multiset semantics of attack trees~\cite{MaOo}.
This stresses the, much needed, unifying character of our approach.
}

\noindent{\em Organization:} Section~\ref{sec:related-work} summarizes the 
related work and puts our work in context. 
Section~\ref{sec:sand-atree} provides a formal definition of \SAND~attack trees
and its semantics using series-parallel graphs. Section~\ref{sec:axioms}
defines a complete set of axioms for \SAND~attack trees and presents 
a term rewriting system which allows identification of
semantically equivalent \SAND~attack trees. Section~\ref{sec:attributes}
outlines an approach to quantitatively analyze \SAND~attack trees using 
attributes. Finally, Section~\ref{sec:conclusions} concludes with an outlook 
on future work.

\section{Related Work and Motivation}
\label{sec:related-work}

{}
Several extensions of attack trees with temporal or 
causal dependencies between 
attack steps have been proposed.
We observe that there are three different approaches to achieve this
goal.
The first approach is to use standard attack trees with the added
assumption that the children of an \AND\ node are sequentially
ordered. This approach is mostly applied to the design
of algorithms or tools for the analysis of attack trees under the
assumption of ordered events. 

The second approach is to introduce a mechanism for ordering
events in an attack tree, for instance by adding a
new type of edge to express causality or conditionality. In its most
general case, any partial order on the events in an attack tree can be
specified.
The third approach consists of the introduction of a new type of
node for sequencing. Most extensions fall in this category.
This approach is used by authors who require their formalism to be
backward compatible, or who need standard, as well as ordered
conjunction. We discuss for each of these approaches 
the most relevant papers with respect to the present article.
{
That is, we only consider approaches that still have the main
characteristics of attack trees, being the presence of \AND\ and \OR\
nodes and the interpretation of the edges as a refinement relation.
Thus, we consider approaches such as attack
graphs~\cite{SHJLW2002,OBM2006} and Bayesian
networks~\cite{QL2004,JW2007,LM2005} as out of scope
for this paper.
}

\subsubsection{Approaches with a sequential interpretation of \AND.}
In their work on \emph{Bayesian networks for security}, Qin and
Lee define a transformation from attack trees to Bayesian
networks~\cite{QiLe}. They state that ``there always exists
an implicit dependent and sequential relationship between \AND\ nodes
in an attack tree.'' Most literature on attack trees seem to
contradict this statement, implying that there is a need to explicitly
identify such sequential relationships.

J\"urgenson and Willemson developed an algorithm to calculate the
\emph{expected outcome} of an attack tree~\cite{JW2009}. The goal of
the algorithm is to determine a permutation of leaves for which the
optimal expected outcome for an attacker can be achieved. In essence,
their input is an attack tree where an \AND\ node represents all
possible sequences of its children.
A peculiarity of their interpretation is that multiple occurrences of
the same node are considered only once, implying that the execution of
twice the same action cannot be expressed. 

\subsubsection{Approaches introducing a general order.}
{
Peine, Jawurek, and Mandel 
introduce \emph{security goal indicator trees}~\cite{PJM2008}
in which nodes can be related by a notion of \emph{conditional
dependency} and Boolean connectors. 
The authors, 
however, do not formally specify the syntax and 
semantics of the model. 
} 
A more general approach is proposed by Pi\`etre-Cambac\'ed\`es and 
Bouissou 
\cite{PB2010}, who apply 
\emph{Boolean logic driven Markov
processes} to security modeling. Their formalism does not introduce new
gates, but a (trigger-)relation on the nodes of the attack tree.
Although triggers 
can express a more general sequential relation than the $\SAND$
operator, they lack the readability of standard attack tree operators. 

\emph{Vulnerability cause graphs}~\cite{ArBySh,ByArShDu} combine properties of attack 
trees ($\AND$ and $\OR$ nodes) and attack graphs (edges express order 
rather 
than refinement). The interaction between the $\AND$ nodes and the order 
relation is defined through a graph transformation called \emph{conversion of 
conjunctions}, which ignores the order between nodes. This 
discrepancy could be solved by considering distinct
conjunctive and sequential conjunctive nodes, as we do in this paper.

\subsubsection{Approaches introducing sequential \AND.}
As noted by Arnold et 
al.~\cite{Arnold-POST14}, the analysis of time-dependent 
attacks requires attack trees to be extended with a sequential operator. This 
is
accomplished by defining sequential nodes as conjunctive nodes with a
notion of progress of time.
The authors define a formal semantics for this extension based on cumulative
distribution functions (CDFs), where a CDF denotes the probability that a
successful attack occurs within time $t$. The main difference with our work is 
that their approach is based on
an explicit notion of time, while we have a more abstract approach
based on causality. In their semantics, the meaning of an extended
attack tree is a CDF, in which the relation to the individual basic
attacks is not explicit anymore. In contrast, in our semantics
the individual basic attacks and their causal ordering remain visible.
As such, our semantics can be considered more abstract,
and indeed, we can formulate their semantics as an \emph{attribute} in our
approach.

\emph{Enhanced attack trees}~\cite{AY2006} (EATs) distinguish between $\OR$, $\AND$ 
and $\OAND$ (Ordered $\AND$). Similarly to the approach of Arnold et
al.~\cite{Arnold-POST14}, ordered $\AND$ nodes are used to express 
temporal dependencies between attack components. The authors evaluate EATs by 
transforming them into tree automata. Intermediate states in the automaton 
support the task of reporting partial attacks. However, because every 
intermediate node of the tree corresponds to a state in the tree automaton, 
their approach does not scale well. This problem can be addressed by 
considering 
the normal form of attack trees, as proposed in this article. 

Not every extension of attack trees with $\SAND$ refinements concerns 
time-dependent attack scenarios; some aim at supporting risk 
analyses with conditional
probabilities. For that purpose, Wen-Ping and Wei-Min 
introduce \emph{improved attack trees}~\cite{LL2011}. The concepts, 
however, are described at an intuitive level only. 

\emph{Unified parameterizable attack trees}~\cite{WaWhPhPa} unify different extensions 
of attack trees (structural, computational, and hybrid). The authors 
consider two types of ordered $\AND$ connectors: priority-based connectors
and time-based connectors. 
The children of the former are ordered from
highest to lowest priority, whereas the children of the latter are
ordered temporally. 
Our formalism gives a single interpretation to the 
$\SAND$ operator, yet it can capture both connectors. 

{
Due to obvious similarities, we also review approaches that 
introduce the $\SAND$ operator in fault 
trees. For example, Brooke and Paige include
five fault tree gates: $\AND$, $\OR$, priority $\AND$, exclusive $\OR$,
and an inhibit gate~\cite{BP2003}. The authors do not
discuss the semantics of their model for security, though. Another fault tree 
based approach is discussed by
Khand~\cite{Khan}, who proposes to extend attack trees with a set of gates
from dynamical fault tree modeling that overlaps with the gates used
by Brooke and Paige~\cite{BP2003} and in particular contains the priority 
$\AND$ gate. 
Khand assigns truth values to his 
attack trees by giving truth tables for all gates. Khand's
truth tables, when restricted to $\AND$, $\OR$, and priority $\AND$, 
constitute an attribute domain 
which is compatible (in the sense
of~\cite{KoMaRaSc_JLC}) with the SP semantics for 
$\SAND$ attack trees as defined in this paper.
}

We observe that the extensions of attack trees with sequential conjunction are 
rather diverse in terms of application domain, interpretation, and formality. 
In order to give a clear and unambiguous interpretation of the $\SAND$ operator 
and capture different application domains, it is necessary to give a formal 
semantics as a translation to a well-understood domain. Note that, neither the 
multiset~\cite{MaOo} nor the propositional 
semantics~\cite{KoPoSc} can express ordering of attack components. 
Therefore, a richer semantical domain needs to be defined. 
The purpose of this article is to address this problem.

\section{Attack Trees with Sequential Conjunction}
\label{sec:sand-atree}

We extend the attack tree formalism so that a refinement of a
(sub-)goal of an attacker can be a sequential conjunct (denoted by
\SAND) in addition to disjuncts and conjuncts. We first give a 
definition of attack trees with the new sequential operator and then 
define series-parallel graphs on which the semantics for the new attack
trees is based.

\subsection{\SAND \ Attack Trees}\label{sec:atree-sand}

Let $\basicact$ denote the set of all possible basic actions of an attacker. 
We formalize standard attack trees introduced by Schneier in~\cite{Schn} 
and call them simply \emph{attack trees} in the rest of this
paper. Attack trees are closed terms 
over the signature $\basicact\cup \{\OR, \AND\}$, 
generated 
by the following grammar, where $b\in\basicact$ is a terminal symbol. 
\begin{equation} 
\label{gram:AT}
t ::= b \mid \OR(t, \dots, t) \mid \AND(t, \dots, t).
\end{equation}
The universe of attack trees is denoted by $\atree$. 
\emph{$\SAND$ attack trees} are closed terms over the signature
$\basicact\cup \{\OR, \AND, \SAND\}$, where 
$\SAND$ is a non-commutative operator called \emph{sequential conjunction},
and are generated by the grammar
\begin{equation} 
\label{gram:SAT}
t ::= b \mid \OR(t, \dots, t) \mid \AND(t, \dots, t) \mid \SAND(t, \dots, t
). 
\end{equation}
The universe of $\SAND$ attack trees is denoted by $\sandtree$. The 
purpose of \OR\ and \AND\ refinements in $\SAND$ attack trees
is the same as in 
attack trees. The sequential conjunctive refinement \SAND\ allows us to model 
that a certain goal is reached if and only if all 
its subgoals are reached in a precise order. 

The following attack scenario 
motivates the need for extending attack trees with sequential
conjunctive refinement. 
\begin{example}
\label{eg:def-attack-tree}
Consider 
a file server $\server$, offering
ftp, ssh, and rsh services. 
The attack tree in Figure~\ref{fig:attack-tree}
shows how an attacker can gain root privileges on 
$\server$ (\becomeroot), in two ways: 
either without providing any user credentials 
(\noauth)
or by breaching the authentication mechanism
(\auth).
\begin{figure}[t]
\centering
\includegraphics[height=4.5cm]{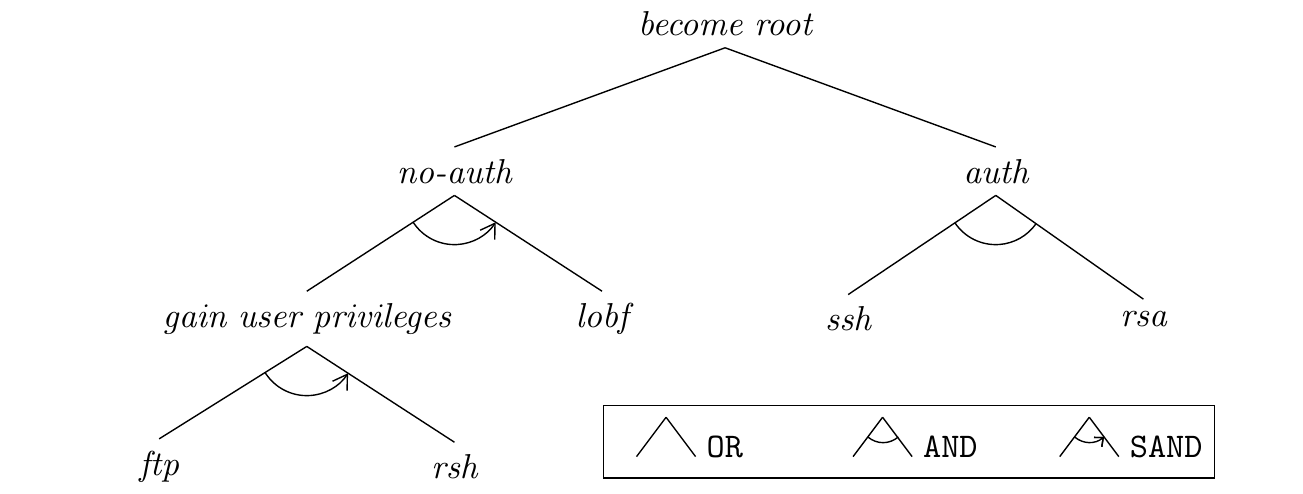}
\caption{An attack tree with sequential and parallel conjunctions}
\label{fig:attack-tree}
\end{figure}

In the first case,
the attacker must first gain user privileges on $\server$ 
(\textit{gain user pri\-vi\-leges}) 
and then perform a local buffer overflow attack (\localbofx). 
Since the attack steps must be executed in this particular order, 
the use of \SAND~refinement is substantial.
To gain user privileges, the attacker 
must exploit an ftp vulnerability 
to anonymously upload a list of trusted hosts to $\server$ 
(\ftprhostsx).\footnote{For readability, attack actions are named after
the services that are exploited.} Finally, she can use the
new trust condition to remotely execute shell commands on $\server$ (\rshx). 
 
The second way is to abuse a buffer overflow in both the 
ssh daemon (\sshbofx) and the RSAREF2 library (\rsarefbofx) used for authentication.
These attacks can be executed in any order, which is modeled with 
the standard \AND\ refinement. 

Using the term notation introduced in this section, 
we can represent the \SAND\ attack tree in Figure~\ref{fig:attack-tree} 
as
\[
t = \OR\Big(\SAND\big(\SAND(\ftprhostsx, \rshx), \localbofx\big), \\
 \AND(\sshbofx, \rsarefbofx)\Big),
\]
where $\ftprhostsx, \rshx, \localbofx, \sshbofx, \rsarefbofx\in \basicact$
are basic actions.
\end{example}

\subsection{Series-Parallel Graphs}
\label{sec:spgraphs}

A \emph{series-parallel graph} (SP graph) is an edge-labeled directed graph that has 
two unique, distinct vertices, called \emph{source} and \emph{sink}, 
and that can be constructed with
the two operators for sequential and parallel composition of graphs
that we formally define below. 
A source is a vertex which has no incoming edges and a 
sink is a vertex without outgoing edges. 

Our formal definition of SP graphs is based on \emph{multisets},
i.e., sets in which members are allowed to occur more than once. 
We use $\msl \cdot \msr$ to denote multisets and $\powerset(\cdot)$
to denote powersets. 
The \emph{support} $M^{\star}$ of a multiset $M$ is the set of 
distinct elements in $M$. 
For instance, the support of the multiset $M = \msl b_1, b_2, b_2 \msr$
is $M^{\star} = \{b_1, b_2\}$. 

In order to define SP graphs, we first introduce the notion of 
source-sink graphs labeled by the elements of $\basicact$. 

\begin{definition}
A \emph{source-sink graph} over $\basicact$ is a tuple
$G=(V,E,s,z)$, where $V$ is the set of vertices,
$E$ is a multiset of labeled edges with support $E^{\star} \subseteq V \times 
\basicact \times V$, 
$s\in V$ is the unique source, $z\in V$ is the unique sink, 
and $s\not =z$. 
\end{definition}
The sequential composition of a source-sink graph $G=(V,E,s,z)$ with a 
source-sink graph 
$G'=(V',E',s',z')$, denoted by $G\seqop G'$, is the
graph resulting from taking the 
disjoint union of $G$ and $G'$ and identifying the
sink of $G$ with the source of $G'$. 
More precisely,
let $\dot\cup$ denote the disjoint union operator
and $E^{[s/z]}$ denote the multiset of edges in $E$, where all occurrences of 
vertex $z$ are replaced by vertex $s$. 
Then we define
\[G\seqop G' = (V\setminus\{z\}\dot\cup V', E^{[s'/z]}\dot\cup E', s,
z').\]
The parallel composition, denoted by $G\parop G'$, is defined
similarly, except that the two sources are identified 
and the two sinks are identified. 
Formally, we have
\[G\parop G' = (V\setminus\{s,z\}\dot\cup V', E^{[s'/s,z'/z]}\dot\cup
E', s', z').\] 
It follows directly from the definitions that 
the sequential composition is associative and that the 
parallel composition is associative and
commutative.

We write $\xrightarrow{b}$ for the graph with a single edge labeled with 
$b$ and define SP graphs as follows. 
\begin{definition} \label{def:sp-graph}
The set $\spset$ of \emph{series-parallel graphs} (SP graphs) over 
$\basicact$ is defined inductively by the following two rules
\begin{itemize}
\item For $b\in \basicact$, $\xrightarrow{b}$ is an SP graph.
\item If $G$ and $G'$ are SP graphs, then so are 
$G\seqop G'$ and $G\parop G'$.
\end{itemize}
\end{definition}
It follows directly from Definition~\ref{def:sp-graph} 
that SP graphs are connected and acyclic. Moreover, 
every vertex of an SP graph lies on a path from the source to
the sink. 
We consider two SP graphs to be \emph{equal} 
if there is a bijection between their
sets of vertices that preserves the edges and edge labels.
{
}
\begin{example}
Figure~\ref{fig:example-spgraph} shows an example of an SP graph with
the source $s$ and the sink $z$.
This graph corresponds to the construction
\[\big(\xrightarrow{a}\parop \xrightarrow{b} \parop \xrightarrow{b}\big) \seqop
\xrightarrow{c} \seqop \Big(\big(\xrightarrow{d} \seqop (\xrightarrow{e}\parop
\xrightarrow{f})\big) \parop \xrightarrow{g}\Big).\]

\begin{figure}[t]
\centering
\includegraphics[height=2.2cm]{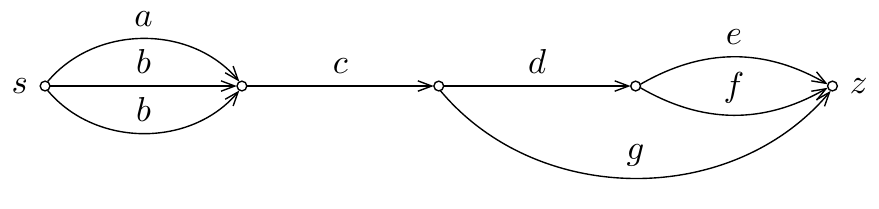}
\caption{A series-parallel graph}
\label{fig:example-spgraph}
\end{figure}

\end{example}

\subsection{SP Semantics for $\SAND$ Attack Trees}
\label{sec:atree-sand-semantics}

Numerous semantics have been proposed to 
interpret attack trees, 
including 
propositional logic~\cite{KoPoSc_iFM'14}, multisets~\cite{MaOo}, 
 De Morgan lattices~\cite{KoPoSc}, tree automata~\cite{AY2006},
and 
Markov processes~\cite{PB2010,Arnold-POST14}. 
The choice of a semantics allows us to accurately represent the
assumptions made in a security scenario, e.g., 
whether actions can be repeated or resources reused, and 
to decide which trees represent the same security scenario. 
The advantages of formalizing attack trees and the need for various
semantics have been discussed in~\cite{KoMaRaSc_JLC}.
Since attack trees are \AND/\OR~trees, the most natural interpretation 
is based on propositional logic. However, because the logical operators 
are idempotent, this interpretation assumes 
that the multiplicity of an action is irrelevant. 
As a consequence, 
the propositional semantics is not well suited to reason about 
scenarios with multiple occurrences of the same action. 
Due to this lack of expressivity a semantics was
proposed~\cite{MaOo} in which 
the multiplicity of actions is taken into account. This was achieved
by interpreting an attack tree as a set of multisets that represent
different ways of reaching the root goal. 
This multiset semantics is compatible
with computations 
that depend on the number of occurrences of an action in the tree, such as 
the minimal time to carry out the attack represented by the root goal. 
 
We now extend the multiset semantics to $\SAND$ attack trees. 
Since SP graphs naturally extend multisets with 
a partial order, they supply a formalism in which we can interpret 
trees using both --- commutative and sequential --- conjunctive refinements. 
SP graphs therefore provide a canonical semantics for \SAND\ trees in which 
multiplicity and ordering of goals and actions are significant.
The idea is to interpret an attack tree $t$ as a set of SP graphs. 
The semantics $\sem{t}=\{G_1,\dots,G_k\}$ 
of a tree $t$ corresponds to the set of possible 
attacks $G_i$, where each attack is
described by an SP graph labeled by the basic actions of $t$. 
{
}

\begin{definition} \label{def:sp-sem}
The \emph{SP semantics} for $\SAND$ attack trees is given by the function 
$\sem{\cdot}: \sandtree \to \power{\spset}$, 
which is defined recursively as follows:
for $b\in\basicact$, $t_i \in \sandtree$, $1\leq i \leq k$, 
\[
\def\arraystretch{1.3}
\begin{array}{l}
\sem{b} = \{\xrightarrow{b}\}\\
\sem{\OR(t_1, \dots, t_k)} = \bigcup_{i=1}^{k}{\sem{t_i}}\\
\sem{\AND(t_1, \dots, t_k)} =
 \{ G_1 \parop \dots \parop G_k \ \mid \ 
 (G_1, ..., G_k) \in \sem{t_1} \times ... \times \sem{t_k}
 \}\\
\sem{\SAND(t_1, \dots, t_k)} =
 \{ G_1 \seqop \dots \seqop G_k\ \mid \ 
 (G_1, ..., G_k) \in \sem{t_1} \times ... \times \sem{t_k}
 \}.\\
\end{array}
\]
\end{definition} 
{
 The SP semantics maps $\SAND$ attack trees to sets of SP graphs as follows. 
 A leaf corresponding to a basic action $b$
 is translated into a singleton set containing the SP graph 
 which consists of a single edge labeled with $b$.
 The semantics of a disjunctive node is the set 
 of all the alternative attacks described by the node's children.
 The semantics of a conjunctive node is 
 the parallel composition of every attack
 alternative from each of its children. 
 Finally, the semantics of a sequential conjunctive node is 
 a sequential composition of 
attack alternatives for the children. 
}

\begin{example}
\label{ex:sem(t)}
The SP semantics of the attack tree $t$ 
depicted in Figure~\ref{fig:attack-tree} is
\begin{equation*}
\sem{t}=\{\xrightarrow{\ftprhostsx}\xrightarrow{\rshx}
\xrightarrow{\localbofx}\ , \ 
\xrightarrow{\sshbofx} \parop \xrightarrow{\rsarefbofx}\}.
\end{equation*}
\end{example}
As shown in Example~\ref{ex:sem(t)}, the SP semantics provides an alternative 
graph representation for attack trees 
and therefore contributes
a different perspective on an attack scenario. 
The \SAND~attack tree emphasizes the refinement of goals, whereas 
SP graphs highlight the sequential aspect of attacks.
{
}

The SP semantics provides a natural partition of
$\sandtree$ into equivalence classes. 
\begin{definition}\label{def:equivalent}
Two \SAND\ attack trees $t_1$ and $t_2$ are \emph{equivalent with respect 
to the SP semantics}
if and only if they are interpreted by the same set of SP graphs, i.e., 
$\sem{t_1}=\sem{t_2}$.
\end{definition}
By Definition~\ref{def:equivalent}, if the SP semantics provides 
accurate assumptions for an attack scenario, then 
two $\SAND$ attack trees 
represent the same attack scenario 
if and only if they are equivalent 
with respect to the SP semantics. 

{We finish this section by noticing that in the case of \SAND\ attack 
trees without any 
$\SAND$ refinement, the SP semantics coincides with the multiset semantics 
introduced in~\cite{MaOo}. Indeed, it suffices to identify the multiset 
$\msl b_1, \dots, b_k \msr$ with the SP 
graph $\xrightarrow{b_1}\parop \dots \parop \xrightarrow{b_k}$. 
We discuss this issue more in details in Section~\ref{sec:SP_vs_M}.

}

\section{Axiomatization of the SP Semantics} 
\label{sec:axioms}
{
In order to provide efficient analysis methods 
for attack tree-like models, we need to be able to decide whether two 
trees are equivalent with respect to a given semantics. 
Ideally, we would like to find the most efficient (e.g., the smallest)
representation of a given security scenario. However, 
in the case of the \SAND\ semantics, 
there exists an infinite number of trees $t'$ equivalent to a given tree $t$.

In this section we study the mathematical implications of 
using sets of SP graphs as an interpretation domain for \SAND\ attack trees. 
We introduce an axiomatization of $\SAND$ attack
trees which is complete with respect to the SP semantics.
This allows us to reason directly on $\SAND$ attack trees, without having to
move to the semantical domain. Further, we derive a term
rewriting system from the axiomatization as a means to effectively
decide whether two $\SAND$ attack trees are equivalent with respect to the 
SP semantics. As a
consequence, we obtain a canonical representation of $\SAND$
attack trees which we prove to be isomorphic to sets of SP graphs.
}

\subsection{A complete set of axioms for the SP semantics}

Let $\var$ be a set of variables denoted 
by capital letters. 
Following the approach developed in~\cite{KoMaRaSc_JLC}, we 
axiomatize
$\SAND$ attack trees 
with equations 
$l=r$, where $l$ and $r$ are terms over 
variables in $\var$, 
constants in $\basicact$, and the operators \AND, \OR, and \SAND. 
The equations formalize the intended properties of refinements
and provide semantics-preserving transformations of 
\SAND\ attack trees. 
\begin{example}
\label{ex:axiom}
Let $\mathrm{Sym}_{\ell}$ denote the set of all bijections
from $\{1,\dots,{\ell}\}$ to itself. 
The axiom 
\[\AND(Y_1,\dots,Y_{\ell})=\AND(Y_{\sigma(1)},\dots,Y_{\sigma({\ell})})\] 
expresses that the order between children refining a parallel conjunctive node 
is not relevant. 
In other words, the operator \AND\ is commutative. This 
implies that any two trees of the form 
$\AND(t_1,\dots,t_l)$ and $\AND(t_{\sigma(1)},\dots,t_{\sigma(l)})$
represent the same scenario.
\end{example}

Our goal is to define a complete set of axioms, denoted by $\ESP$, for
the SP semantics for \SAND\ attack trees. Intuitively, $\ESP$
is a set of equations
that can be applied to transform 
a \SAND\ attack tree 
into any equivalent \SAND~attack tree 
with respect to the SP semantics. 
Before defining the set $\ESP$, 
we formalize the notion of a complete set of axioms for a given 
semantics for (\SAND) attack trees, following~\cite{KoMaRaSc_JLC}. 

Let $T(\var,\Sigma)$ be the free term algebra
over the set of variables
$\var$ and a signature $\Sigma$, 
and let $E$ be a set of equations over $T(\var,\Sigma)$. 
The equation $t=t'$, where $t,t'\in T(\var,\Sigma)$,
is a \emph{syntactic consequence} of $E$ 
(denoted by $E\vdash t=t'$) 
if it can be derived from $E$ by application 
of the following rules. 
For all $t,t',t''\in T(\var,\Sigma)$, $\rho\colon \var\to T(\var,\Sigma)$,
 and $X\in \var$: 
\begin{itemize}
\item $E\vdash t=t$, 
\item if $t= t' \in E$, then $E\vdash t=t'$,
\item if $E\vdash t=t'$, then $E\vdash t'=t$, 
\item if $E\vdash t=t'$ and $E\vdash t'=t''$, then $E\vdash t=t''$. 
\item if 
$E\vdash t=t'$, then
$E\vdash \rho(t)=\rho(t')$,
\item if $E\vdash t=t'$, then 
$E \vdash t'' [t/X] = t'' [t'/X]$, 
where $t'' [t/X]$ is the term obtained from $t''$
by replacing all occurrences of variable $X$ with $t$. 
\end{itemize}

 Let $\sandtreevar$ denote the set of terms constructed from the set of 
variables $\var$, the set of basic actions $\basicact$ (treated as
constants), and operators $\OR$, $\AND$ and $\SAND$.
Let $\atreevar$ be the set of terms constructed 
from the same parts, except for the operator \SAND. 
Using the notion of syntactic consequence, 
we define a complete set of axioms for a semantics for attack trees. 
\begin{definition}
\label{def/compl_set_axioms}
Let $\semantic{\cdot}$ be a semantics for attack trees 
(resp. \SAND\ attack trees)
and let $E$ be a set of equations over $\atree^{\var}$ 
(resp. $\sandtreevar$). 
The set $E$ is a \emph{complete set of axioms} for $\semantic{\cdot}$
if and only if, for all $t,t'\in \atree$ (resp. $\sandtree$)
\[\semantic{t}=\semantic{t'} \iff E\vdash t=t'. \]
\end{definition}

We are now ready to give a complete set of axioms for the SP semantics
for \SAND\ attack trees. These axioms 
allow us to determine whether two visually
distinct trees represent the same security 
scenario according to the SP semantics. 

\begin{theorem}
\label{th:axioms}
Given $k,m\geq 0$, and $\ell\geq 1$, let
$\Xvec = X_1, \ldots, X_k$,
$\Yvec = Y_1, \ldots, Y_{\ell}$, and
$\Zvec = Z_1, \ldots, Z_m$
be sequences of variables.
Let $\mathrm{Sym}_{\ell}$ be the set of all bijections
from $\{1,\dots,{\ell}\}$ to itself. The following 
set of equations over $\sandtreevar$, denoted by $\ESP$, is a complete 
set of axioms\footnote{Note that the axioms are in fact \emph{axiom schemes}. 
The operators $\OR$, $\AND$ and $\SAND$ are \emph{unranked},
representing infinitely many $k$-ary function symbols ($k\geq1$).} for 
the SP semantics for $\SAND$ attack trees.
\begingroup
\allowdisplaybreaks
\begin{align}
& \OR(Y_1,\dots,Y_{\ell})=\OR(Y_{\sigma(1)},\dots,Y_{\sigma({\ell})}),
\quad \forall \sigma\in \mathrm{Sym}_{\ell} \tag{$E_{1}$} \label{E1}\\
& \AND(Y_1,\dots,Y_{\ell})=\AND(Y_{\sigma(1)},\dots,Y_{\sigma({\ell})}),
\quad \forall \sigma\in \mathrm{Sym}_{\ell} 
\tag{$E_{2}$}\label{E2}\\ 
& \OR\big(\Xvec,\OR(\Yvec)\big)=
\OR(\Xvec,\Yvec) \tag{$E_{3}$} \label{E3}\\
& \AND\big(\Xvec,\AND(\Yvec)\big)=
\AND(\Xvec,\Yvec) \tag{$E_{4}$} \label{E4}\\
& \SAND\big(\Xvec,\SAND(\Yvec),\Zvec\big)=
\SAND(\Xvec,\Yvec,\Zvec) \tag{$E_{4'}$} \label{E4'}\\
& \OR(A)= A \tag{$E_{5}$} \label{E5}\\
& \AND(A)= A \tag{$E_{6}$} \label{E6}\\
& \SAND(A)= A \tag{$E_{6'}$} \label{E6'}\\
& \AND\big(\Xvec,\OR(\Yvec)\big)= 
\OR\big(\AND(\Xvec,Y_1),\dots,\AND(\Xvec,Y_{\ell})\big) \tag{$E_{10}$} \label{E10}\\
& \SAND\big(\Xvec,\OR(\Yvec),\Zvec\big)= 
\OR\big(\SAND(\Xvec,Y_1,\Zvec),\dots,\SAND(\Xvec,Y_{\ell},\Zvec)\big) \tag{$E_{10'}$} 
\label{E10'}\\
& \OR(A,A,\Xvec)= \OR(A,\Xvec). \tag{$E_{11}$} 
\label{E11}
\end{align}
\endgroup
\end{theorem}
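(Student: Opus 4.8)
The statement to establish is, by Definition~\ref{def/compl_set_axioms}, the biconditional $\sem{t}=\sem{t'} \iff \ESP\vdash t=t'$ for all $t,t'\in\sandtree$. I would split this into the \emph{soundness} direction ($\Leftarrow$) and the \emph{completeness} direction ($\Rightarrow$) and treat them separately, since they require quite different arguments.

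For soundness, the plan is to verify that each axiom scheme in $\ESP$ is valid under $\sem{\cdot}$, and then to observe that the six derivation rules for syntactic consequence preserve semantic equality. The latter holds because $\sem{\cdot}$ is a well-defined function (which handles reflexivity, symmetry, and transitivity), is stable under substitution of trees for variables, and is a congruence for $\OR,\AND,\SAND$ (which handles the context-replacement rule). Checking the axioms themselves is a direct computation from Definition~\ref{def:sp-sem}: $E_1$, $E_3$, $E_{11}$ reduce to commutativity, associativity, and idempotence of $\cup$; $E_2$, $E_4$ to commutativity and associativity of $\parop$; $E_{4'}$ to associativity of $\seqop$; $E_5$, $E_6$, $E_{6'}$ to the fact that a one-factor composition is the identity; and the distributivity laws $E_{10}$, $E_{10'}$ to the distribution of $\parop$ and $\seqop$ over the set-union in the semantics. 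None of these is hard, as the needed algebraic properties of $\seqop$ and $\parop$ are already recorded after the definition of source-sink graphs. Note that $\OR$-arguments collapse to a set (idempotence $E_{11}$), whereas $\AND$-arguments only collapse to a multiset, matching the fact that $\parop$ is commutative but not idempotent on labeled edges.

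Completeness is the substantial direction, and I would prove it by a normal form argument. First I would define a canonical normal form $\normalform(t)$: an outer $\OR$ of pairwise-distinct $\OR$-free subtrees $N_1,\dots,N_p$, where each $N_i$ is built from $\AND$ and $\SAND$ in strictly alternating layers (no $\AND$ directly under $\AND$, no $\SAND$ directly under $\SAND$), has no unary nodes, and has the children of every $\AND$ node listed in a fixed canonical order. The second step is to show $\ESP\vdash t=\normalform(t)$ for every $t$: the distributivity axioms $E_{10}$, $E_{10'}$ push every $\OR$ to the top; the flattening axioms $E_3$, $E_4$, $E_{4'}$ merge adjacent same-type operators, forcing the alternation; the unit axioms $E_5$, $E_6$, $E_{6'}$ delete unary nodes; commutativity $E_2$ sorts $\AND$-children; and $E_1$ together with $E_{11}$ lets me reorder and deduplicate the top-level $\OR$-arguments so that the $N_i$ form a set. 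The third step is a uniqueness lemma: if $t_1,t_2$ are both in normal form and $\sem{t_1}=\sem{t_2}$, then $t_1=t_2$ syntactically. Granting these, completeness follows: from $\sem{t}=\sem{t'}$ and soundness I obtain $\sem{\normalform(t)}=\sem{\normalform(t')}$, hence $\normalform(t)=\normalform(t')$ by uniqueness, and therefore $\ESP\vdash t=\normalform(t)=\normalform(t')=t'$.

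The main obstacle is the uniqueness lemma, whose proof rests on the classical fact that every SP graph admits an essentially unique series-parallel decomposition: a non-edge SP graph is either a series composition of two or more parallel-irreducible factors or a parallel composition of two or more series-irreducible factors, and this top-level split, together with the (multi)set of factors, is uniquely determined by the graph (up to reordering the parallel factors). Using this, I would establish a bijection between $\OR$-free canonical trees and SP graphs, sending a single edge $\xrightarrow{b}$ to the basic action $b$, a maximal series decomposition to a $\SAND$ node, and a maximal parallel decomposition to an $\AND$ node with canonically ordered children. The alternation of the SP decomposition (series factors being edges or parallel graphs, and vice versa) exactly matches the strict $\AND/\SAND$ alternation of the canonical form, so the map is well-defined and invertible. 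Since the semantics of a normal form $\OR(N_1,\dots,N_p)$ is precisely the set of SP graphs associated with the $N_i$, equal semantics forces equal sets of canonical $\OR$-free trees, hence syntactically equal normal forms. I expect the delicate bookkeeping to lie in establishing (or citing) the unique-decomposition property of SP graphs and in verifying that the normalization procedure genuinely lands in the canonical form for every input.
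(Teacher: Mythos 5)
Your proposal is correct, but it takes a genuinely different route from the paper's own proof of Theorem~\ref{th:axioms}. The paper proves the theorem by reduction to a known result: it observes that sets of SP graphs are sets of series-parallel pomsets, identifies \AND, \SAND, and \OR\ with parallel composition, concatenation, and choice, and invokes Gischer's complete axiomatization of the equational theory of series-parallel pomsets (Theorems~4.2 and~4.3 of~\cite{Gischer1988199}), checking only that the axioms of $\ESP$ correspond to Gischer's axioms once the identity elements $0$ and $1$ (which have no counterpart in attack trees) are dropped and binary operators are replaced by unranked ones. You instead give a direct, self-contained completeness argument: soundness by axiom-wise verification, normalization of every term into an \OR-of-alternating-\AND/\SAND\ canonical form derivable in $\ESP$, and a uniqueness lemma resting on the unique (modulo associativity and commutativity of $\parop$) series-parallel decomposition of SP graphs. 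Interestingly, your plan essentially inlines the machinery the paper develops \emph{after} Theorem~\ref{th:axioms}: your uniqueness lemma is the paper's Lemma~\ref{lem:bijection}, and your normalization procedure is the term rewriting system $\thetrs$ of Theorem~\ref{th:trs}, including the need for a termination measure, which the paper supplies via a norm on terms. Note that the dependencies run in opposite directions --- the paper uses (the soundness half of) Theorem~\ref{th:axioms} to establish uniqueness of normal forms in Theorem~\ref{th:trs}, whereas you derive completeness \emph{from} normal forms --- but your order is not circular, since your uniqueness lemma uses only the decomposition property of SP graphs and your normalization uses only derivability plus soundness. What each approach buys: the paper's citation is short and rigorous modulo the (asserted, not proved) claim that omitting Gischer's unit axioms is harmless for unit-free terms, a point your argument sidesteps entirely; your approach is longer and concentrates the difficulty in the classical unique-decomposition theorem for SP graphs and in the bookkeeping of a canonical ordering of \AND- and \OR-arguments (the paper avoids the latter by rewriting modulo the commutativity axioms \eqref{E1} and \eqref{E2}), but it is elementary, self-contained, and yields the canonical forms and a decision procedure as byproducts rather than as a separate development.
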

The numbering of the axioms in $\ESP$ corresponds to the numbering 
of the axioms for the multiset semantics for standard attack trees, 
as presented in~\cite{KoMaRaSc_JLC}, while new axioms (involving \SAND) are 
marked with primes. 
\begin{proof}
The proof of this theorem follows the same
line of reasoning as the proofs of Theorems~4.2 and~4.3 of
Gischer~\cite{Gischer1988199}, where series--parallel pomsets are axiomatized.
{
 To prove the theorem, we remark that 
 SP graphs form a visual representation of 
 series-parallel 
 partially ordered multisets (SP pomsets). 
 A complete, finite axiomatization of pomsets under 
 concatenation, parallel composition and union
 has been provided in~\cite{Gischer1988199}, where 
 sets of series-parallel pomsets have been used to represent processes. 
 In our case, sets of series-parallel pomsets (\ie, sets of SP graphs)
 represent attack trees constructed using 
 \AND\ (having the same properties as the parallel composition of processes), 
 \SAND\ (having the same properties as concatenation),
 and \OR\ (having the same properties as choice). 
 The set $\ESP$ corresponds to the axioms from~\cite{Gischer1988199}. 
 The axioms involving the identity elements (\ie, $1$ -- the empty pomset, and 
$0$ --
 the empty process) have been omitted because they can only be used 
 for transforming processes involving $0$ or $1$ and 
 such identity elements do not exist in the case of attack trees. 
 Furthermore, our axioms are written using unranked operators 
 contrary to the binary operators of concatenation, parallel composition, and 
 choice. 
}

\qed
\end{proof}

\subsection{\SAND~Attack Trees in Canonical Form}
\label{sec:NF}
Let $\semantic{\cdot}$ be a semantics for (\SAND) attack trees.
A complete axiomatization of $\semantic{\cdot}$
can be used to derive a canonical form of trees 
interpreted with $\semantic{\cdot}$. 
Such canonical forms provide the most concise 
representation for equivalent trees
and are the natural representatives 
of equivalence classes defined by 
$\semantic{\cdot}$. 

When \SAND~attack trees are interpreted using the SP semantics, 
their canonical forms 
consist of either a single basic action, 
or of a root node labeled with \OR\ and subtrees 
with nested, alternating 
occurrences of \AND\ and \SAND\ nodes. 
Canonical forms correspond exactly to 
the sets of SP graphs labeled by $\basicact$
and they depict all attack alternatives in a straightforward 
way.

{
Canonical representations of \SAND~attack trees
under the SP semantics
can be defined using the complete set of axioms 
$\ESP$. 
By orienting the equations 
\eqref{E3}, \eqref{E4}, \eqref{E4'}, \eqref{E5}, \eqref{E6},
\eqref{E6'}, \eqref{E10}, \eqref{E10'}, and \eqref{E11}
from left to right, we obtain a term rewriting system, 
denoted by $\thetrs$. 
The canonical representations of \SAND~attack trees 
correspond to normal forms with respect to $\thetrs$. 
In the rest of this section we show that 
the normal forms with respect to $\thetrs$
are exactly the terms generated by the following grammar, 
where $k\geq 2$ and $b\in\basicact$
\begin{align} 
\label{grammar:NF}
\nfterm ::= \quad& \asterm \mid \OR(\asterm_1, \dots, \asterm_k)
 \quad\mbox{for } \asterm_i \neq \asterm_j
 \mbox{ if } i\neq j\\\notag
\asterm ::= \quad & \andterm \mid \sandterm\\\notag
\andterm ::= \quad & b \mid \AND(\sandterm_1, \ldots, \sandterm_k)\\\notag
\sandterm ::= \quad& b \mid \SAND(\andterm_1, \ldots, \andterm_k).\\\notag
\end{align}
The non-terminal $\andterm$ produces all trees that consist of
a single basic action or being 
a nested alternation of 
$\AND$ and $\SAND$ operators,
where the outer operator is $\AND$.
Similarly, $\sandterm$ produces all such trees 
where the outer operator is $\SAND$.
The non-terminal $\asterm$ generates the two 
previously described types of trees. 
Finally, $\nfterm$ 
combines the trees generated by $\asterm$ using 
the \OR\ refinement. 
We denote the sets of terms 
generated by $\nfterm$, $\asterm$, $\andterm$, and $\sandterm$, by $\nfterms$, 
$\asterms$, $\andterms$, and $\sandterms$, respectively. 

We first observe that the terms generated by the non-terminal $\nfterm$
correspond exactly to all sets of SP graphs labeled by the elements of 
$\basicact$. 
\begin{lemma}\label{lem:bijection}
The restriction of function $\sem{.}$ to $\nfterms$
is a bijection from $\nfterms$ to $\power{\spset}$.
\end{lemma}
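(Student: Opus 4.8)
The plan is to factor the statement through the single-graph level and then lift it to sets. Throughout I would work with normal-form terms \emph{modulo} the commutativity axioms \eqref{E1} and \eqref{E2}, so that the children of an \OR-node form a set and the children of an \AND-node a multiset, whereas the order of \SAND-children is retained; this convention is forced on us, since e.g.\ $\AND(C_1,C_2)$ and $\AND(C_2,C_1)$ denote syntactically distinct terms with identical SP semantics, and without it $\sem{\cdot}$ could not be injective. I would also note at the outset that the image of $\sem{\cdot}$ on $\sandtree$ consists only of \emph{finite, nonempty} subsets of $\spset$, so $\power{\spset}$ is to be read as this collection.

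First I would settle the single-graph level. An easy induction on the grammar of $\asterm$ shows that $\sem{C}$ is a singleton for every $C\in\asterms$: a basic action yields $\{\xrightarrow{b}\}$, and an \AND- or \SAND-node applied to children with singleton semantics produces, via a single tuple in the Cartesian product, one parallel resp.\ sequential composition. This lets me define $g\colon \asterms\to\spset$ by letting $g(C)$ be the unique element of $\sem{C}$. The crux is to prove that $g$ is a bijection.

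For surjectivity of $g$ I would argue by induction on the SP-graph structure, using the canonical series--parallel decomposition: a single edge $\xrightarrow{b}$ is the image of the leaf $b$; a graph with more than one edge is either a maximal sequential composition $G_1\seqop\dots\seqop G_m$ ($m\ge2$, no factor itself a top-level sequential composition) or a maximal parallel composition $G_1\parop\dots\parop G_m$. In the first case each $G_i$, not being sequentially decomposable, is an edge or a parallel composition and hence inductively of the form $g(A_i)$ with $A_i\in\andterms$, so $G=g(\SAND(A_1,\dots,A_m))$; the second case is symmetric, producing an \AND-node whose children lie in $\sandterms$. The alternation demanded by the grammar thus appears automatically. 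Injectivity of $g$ is the mirror image: because the maximal sequential/parallel decomposition of an SP graph is unique, and a non-trivial SP graph cannot be simultaneously sequentially and parallelly decomposable, two distinct alternating terms cannot be sent to the same graph; formally the canonical decomposition just described is a two-sided inverse of $g$. I expect this uniqueness of decomposition to be the main obstacle, and I would either invoke the standard unique series--parallel decomposition theorem, which underlies the pomset axiomatization of Gischer~\cite{Gischer1988199} already cited in Theorem~\ref{th:axioms}, or reprove it by the usual induction.

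Finally I would lift $g$ to sets. A term generated by $\nfterm$ is either a single $C\in\asterms$, with $\sem{C}=\{g(C)\}$, or $\OR(C_1,\dots,C_k)$ with the $C_i$ pairwise distinct (modulo \eqref{E1} and \eqref{E2}), whence $\sem{\OR(C_1,\dots,C_k)}=\{g(C_1),\dots,g(C_k)\}$; since $g$ is injective the $g(C_i)$ are pairwise distinct and this is a $k$-element set. Surjectivity onto the finite nonempty subsets is then immediate: given $\{G_1,\dots,G_n\}$, take $C_i=g^{-1}(G_i)$, which are distinct, and output $C_1$ if $n=1$ and $\OR(C_1,\dots,C_n)$ otherwise. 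Injectivity follows because from $\sem{N}$ one recovers, via $g^{-1}$, exactly the set of top-level $\asterm$-summands of $N$, so two normal forms with equal semantics coincide. Together these establish the claimed bijection.
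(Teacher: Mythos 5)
Your proposal is correct and takes essentially the same route as the paper: first establish that $\sem{\cdot}$ restricts to a bijection between $\asterms$ and $\spset$, with both injectivity and surjectivity resting on the unique (modulo associativity) series--parallel decomposition of SP graphs, and then lift this correspondence through the \OR{} level to $\nfterms$. Your two preliminary caveats --- working modulo the commutativity axioms \eqref{E1} and \eqref{E2}, and reading $\power{\spset}$ as the collection of finite nonempty subsets of $\spset$ --- are left implicit in the paper (which does rewrite modulo commutativity of $\OR$ and $\AND$, cf.\ the proof of Theorem~\ref{th:trs}) and are in fact needed for the statement to be literally true, so making them explicit strengthens rather than deviates from the paper's argument.
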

\begin{proof}
The proof consists of two steps. First we prove that the terms from
$\asterms$ exactly correspond to SP graphs, after which we extend this
result to the correspondence between $\nfterms$ and the sets of SP graphs.
\begin{enumerate}
\item $\sem{.}$ is a bijection from $\asterms$ to $\spset$.\\
 For injectivity we prove by induction  that $\sem{\asterm_1} = \sem{\asterm_2}$
 implies $\asterm_1 = \asterm_2$. Given that for every $t \in \asterms
 $ the set $\sem{t}$ contains a single SP graph, we abuse notation and 
 refer to 
 $\sem{t}$ as the SP graph contained in $\sem{t}$. Let us assume that 
 $\asterm_1$ is a basic action $b$, then $\sem{\asterm_1}$ is a single 
 edge graph $\xrightarrow{b}$, it follows that $\asterm_2 = 
 b$ considering the edge labels preservation property of two isomorphic SP 
 graphs. Let us consider now that $\asterm_1 = \AND(\sandterm_1^1, \cdots, \sandterm_k^1)$ for 
 some $k \geq 2$, which means that $\sem{\asterm_1} = 
 \sem{\sandterm_1^1} \parop \cdots \parop
 \sem{\sandterm_k^1}$. Given that $\sem{\asterm_1} = \sem{\asterm_2}$,
 for every $i \in \{1, \cdots, k\}$, $\sem{\asterm_2}$ contains a subgraph 
 $G_i$ which is isomorphic to $\sem{\sandterm_i^1}$. Thus, considering that 
 $\sem{\sandterm_i^1}$ is either a basic action or a sequential composition, 
 we have that 
 $\sem{\asterm_2} = G_1 \parop \cdots \parop G_k$ and, according to grammar
 \eqref{grammar:NF}, 
 $\asterm_2 = \AND(\sandterm_1^2, \cdots, \sandterm_k^2)$, for some terms 
 $\sandterm_1^2, \cdots, \sandterm_k^2 \in \sandterms$. By the 
 induction hypothesis, it 
 follows that $\sem{\sandterm_i^1} = G_i = 
 \sem{\sandterm_i^2}$ implies $\sandterm_i^1 = \sandterm_i^2$, which leads to 
 $\asterm_1 = \asterm_2$. A similar 
 proof can be obtained for the case where $\asterm_1 = \SAND(\andterm_1^1, 
 \cdots, \andterm_k^1)$.

 Surjectivity follows from the fact that every SP graph has a unique
 (modulo associativity) decomposition in terms of 
 the operators for 
 sequential and parallel composition.
 Such a decomposition naturally corresponds to terms from $\asterms$.
\item $\sem{.}$ is a bijection from $\nfterms$ to $\powerset(\spset)$.\\
 For injectivity, we assume that $\sem{\nfterm_1} = \sem{\nfterm_2}$.
 This implies that the sets $\sem{\nfterm_1}$ and $\sem{\nfterm_2}$
 have the same size and the same elements. If this size is $1$, then
 they both contain the same element, which uniquely corresponds to
 a term $\asterm$, so $\nfterm_1 = \asterm = \nfterm_2$.
 If the size of the sets is larger than $1$, then $\nfterm_i$ 
 are of the form $\OR(\asterm_1^i,\dots, \asterm_k^i)$, for 
 $i\in\{1,2\}$. Since $\sem{\nfterm_1} = \sem{\nfterm_2}$, 
 the elements of these two
 sets are pairwise identical.
 Moreover, by definition, 
 all arguments in $\nfterm_i$ are different, which implies 
 $k_1=k_2$.
 From the previous item, it follows
 that the elements of $\sem{\nfterm_i}$ correspond uniquely 
 to $\asterm_1^i,\dots, \asterm_k^i$. 
 From the pairwise equality between the
 arguments of the two terms, it follows that $\nfterm_1$ and
 $\nfterm_2$ are identical.
 
 For surjectivity, 
 let $\{G_1, \dots, G_k\}$ 
 be a set of SP graphs. 
 It follows from the previous item
 that there exist trees $\asterm_1, \dots, \asterm_k$, 
 such that $\sem{\asterm_i}=G_i$, for $i\in\{1,\dots, k\}$. 
 This implies that $\{G_1, \dots, G_k\}= \sem{\OR(\asterm_1, \dots, \asterm_k)}$, 
 which finishes the proof of surjectivity. 
\end{enumerate}
\qed
\end{proof}
Lemma~\ref{lem:bijection} shows that 
the grammar \eqref{grammar:NF} 
generates all \SAND~attack trees in canonical form. 
It remains to be proven that
the set of trees generated by the grammar \eqref{grammar:NF}
is equal to the set of normal forms of the term rewriting system
$\thetrs$ and that these normal forms are unique. 

\begin{theorem}
\label{th:trs}
The term rewriting system $\thetrs$ is strongly terminating and confluent.
\end{theorem}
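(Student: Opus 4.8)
The plan is to prove the two properties separately, since strong termination lets me reduce confluence to local confluence by Newman's lemma.

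For \textbf{strong termination}, a naive measure such as the number of nodes fails, because the distributivity rules \eqref{E10} and \eqref{E10'} duplicate the prefix $\Xvec$ (and $\Zvec$) and thus increase the term size. Instead I would exhibit a reduction order $\succ$ orienting every rule from left to right, with $\thetrs\subseteq\succ$. A convenient choice is a (variadic) recursive path order built from the precedence $\AND,\SAND\succ\OR\succ b$ for every $b\in\basicact$, with multiset status on all operators. The flattening rules \eqref{E3}, \eqref{E4}, \eqref{E4'}, the unit-elimination rules \eqref{E5}, \eqref{E6}, \eqref{E6'}, and the idempotency rule \eqref{E11} are oriented by the subterm property and the multiset extension (each replaces a subterm by strictly smaller ones, or simply drops an argument). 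The distributivity rules are oriented precisely because $\AND,\SAND\succ\OR$: for \eqref{E10} the root passes from $\AND$ to the smaller symbol $\OR$, and it then remains to check $\AND(\Xvec,\OR(\Yvec))\succ\AND(\Xvec,Y_j)$, which holds since $\OR(\Yvec)\succ Y_j$ by the subterm property. As the variables of each right-hand side occur in the corresponding left-hand side, $\succ$ is stable under substitution, so $\thetrs$ terminates. A point I would make explicit here is that \eqref{E10} and \eqref{E10'} must match the disjunction in \emph{any} argument position (otherwise a term such as $\AND(\OR(X_1,X_2),X_3)$ would be irreducible yet not in canonical form); that is, the rules are applied with matching modulo commutativity of $\OR$ and $\AND$, and the axioms \eqref{E1} and \eqref{E2} are deliberately left un-oriented (orienting a commutativity equation destroys termination). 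Choosing multiset status for $\OR$ and $\AND$ makes $\succ$ invariant under permutation of their arguments, hence compatible with the congruence generated by \eqref{E1} and \eqref{E2}, which gives termination of rewriting modulo commutativity.

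For \textbf{confluence}, understood as the Church--Rosser property modulo commutativity, termination reduces the task to local confluence modulo \eqref{E1}, \eqref{E2} by Newman's lemma. I would then verify that every critical pair is joinable, computing critical pairs with commutative unification and additionally checking the coherence pairs arising from overlaps of a rule with \eqref{E1} or \eqref{E2}. The interesting overlaps are: \eqref{E10}/\eqref{E10'} against \eqref{E3} (a disjunction nested inside the distributed one), against \eqref{E5} (a unary disjunction), against \eqref{E4}/\eqref{E6} (on the conjunctive part), against \eqref{E11} (distribution producing duplicate disjuncts), and against each other (two disjunctive children distributed in either order); the structural rules \eqref{E3}, \eqref{E5}, \eqref{E11} also overlap among themselves. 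In each case the two reducts reach a common form up to commutativity, the only care being to track the reordering of arguments produced by distributing over different children.

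The main obstacle is exactly this critical-pair analysis: it is routine case by case but voluminous, and the variadic operators together with commutative matching make the enumeration of overlaps delicate, so the real work is to argue that the set of critical pairs is finite and exhaustively covered. As a shortcut that sidesteps most of this bookkeeping, I would alternatively derive confluence from the results already available: every rule is an oriented instance of an axiom of $\ESP$ and hence preserves $\sem{\cdot}$, so any two normal forms of a common term carry the same SP semantics; once one shows that the $\thetrs$-normal forms are exactly the terms of $\nfterms$, termination guarantees existence of a normal form and Lemma~\ref{lem:bijection} (injectivity of $\sem{\cdot}$ on $\nfterms$, read up to commutativity) forces it to be unique modulo \eqref{E1}, \eqref{E2}. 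Unique normal forms together with termination then yield confluence directly.
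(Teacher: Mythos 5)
Your proposal is correct; on confluence it arrives at essentially the paper's own argument, while on termination it takes a genuinely different route. The paper proves strong termination with an elementary interpretation rather than a path order: it defines a norm by $\norm{b}=1$, $\norm{\OR(X_1,\ldots,X_k)}=\norm{X_1}+\cdots+\norm{X_k}+2$, and $\norm{\AND(X_1,\ldots,X_k)}=\norm{\SAND(X_1,\ldots,X_k)}=2\cdot\norm{X_1}\cdots\norm{X_k}$, checks that every rule strictly decreases it (the $+2$ versus multiplication by $2$ is what absorbs the duplication of $\Xvec$, $\Zvec$ in \eqref{E10} and \eqref{E10'}), and notes that both sides of \eqref{E1} and \eqref{E2} have equal norm, as required for rewriting modulo commutativity. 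Your variadic recursive path order with precedence $\AND,\SAND\succ\OR$ and multiset status accomplishes the same: your orientation checks are the standard RPO ones, and multiset status gives the needed compatibility with the commutativity congruence; the trade-off is that the paper's norm is self-contained and verifiable by direct computation, whereas RPO imports heavier standard machinery but spares one from inventing an interpretation. Your explicit caveats --- that \eqref{E10}/\eqref{E10'} must match the disjunct in any argument position modulo commutativity, and that uniqueness of normal forms holds only up to \eqref{E1}, \eqref{E2} --- are in fact slightly more careful than the paper's text. For confluence, your preferred shortcut is precisely the paper's proof: show that the grammar-generated terms $\nfterms$ are exactly the $\thetrs$-normal forms, derive uniqueness of normal forms from the fact that the rules are oriented axioms of $\ESP$ (hence preserve $\sem{\cdot}$) together with the injectivity of Lemma~\ref{lem:bijection}, and conclude confluence from termination plus unique normal forms (the paper cites a standard reference for this last implication). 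The critical-pair analysis modulo commutativity that you sketch first is a viable but genuinely different and, as you yourself note, laborious alternative that the paper avoids entirely. The one piece you leave undone is the characterization of the normal forms as exactly $\nfterms$ (your ``once one shows\ldots''): this is a necessary lemma for the shortcut, and the paper proves it in a few lines by checking that no rule applies to grammar terms and, conversely, that any irreducible term must be generated by the grammar; you should carry that step out rather than assume it.
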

\begin{proof}
We show that the term rewriting system $\thetrs$ is terminating and confluent
with help of the grammar~\eqref{grammar:NF}, 
in four steps. 
\begin{enumerate}
\item 
First, we show with
standard methodology that the
term rewriting system is terminating.
We define the following norm which assigns natural numbers to terms:
\[
\begin{array}{lll}
\norm{b} & = & 1\\
\norm{\OR(X_1, \ldots, X_k)} & = &
 \norm{X_1} + \ldots + \norm{X_k} + 2\\
\norm{\AND(X_1, \ldots, X_k)} & = &
 2\cdot \norm{X_1} \cdot \ldots \cdot \norm{X_k}\\
\norm{\SAND(X_1, \ldots, X_k)} & = &
 2\cdot \norm{X_1} \cdot \ldots \cdot \norm{X_k}\\
\end{array}
\]

It can be easily verified that for every rewrite rule $l\rightarrow r\in \thetrs$,
we have $\norm{l} > \norm{r}$. Consequently, there are no
infinite reduction sequences, or, in other words, the term rewriting
system is strongly terminating.
Notice that because we consider term rewriting modulo commutativity of
$\OR$ and $\AND$, we have to verify that the left-hand side and the right-hand
side of equations \eqref{E1} and \eqref{E2}
have equal norms~\cite{KBV2001}. This is clearly the
case.
\item 
Now we prove that the terms 
produced by the grammar~\eqref{grammar:NF} 
are exactly the normal forms with respect to $\thetrs$.
For the terms in $\asterms$, none of the rewrite rules can be applied,
because these terms do not contain $\OR$, have no occurrences of
$\AND$ containing an argument of type $\AND$, have no occurrences of
$\SAND$ containing an argument of type $\SAND$, and do not contain
operators with a single argument.
We extend this to terms $\nfterms$ by observing that all $\OR$
operators occurring in such terms have at least two arguments and that
all these arguments are different.

Conversely, consider a term $t$ in normal form that 
contains an $\OR$ operator. Then $t=\OR(t_1,\ldots,t_n)$, where the $t_i$
do not contain an $\OR$ operator, else \eqref{E3}, \eqref{E10}, or
\eqref{E10'} can be applied. 
It remains to show that normal form terms without occurrence of an
$\OR$ operator are in $\asterms$. Such terms are basic terms or have
$\SAND$ or $\AND$ as their top-level operator. The last two cases are
symmetric and we therefore only consider the case
$\AND(t_1,\ldots,t_n)$. We must show that each $t_i$ is a basic term
or in the form $t_i=\SAND(t_1',\ldots,t_m')$. Suppose not, then there
exists a $t_i$ that has $\AND$ as its top-level operator. It follows
that the term is not in normal form because \eqref{E4} can be applied.

\item The normal forms are unique. 
To show that the normal forms are unique, assume that 
$\nfterm_1$ and $\nfterm_2$ are both normal forms for a
 \SAND~attack tree $t$. 
Since the rewrite system $\thetrs$ 
was constructed by orienting the axioms from $\ESP$, 
we have that $\ESP\vdash \nfterm_1=\nfterm_2$. 
This means that $\sem{\nfterm_1}=\sem{\nfterm_2}$.
From bijectivity proven in Lemma~\ref{lem:bijection}, we obtain 
$\nfterm_1 = \nfterm_2$.

\item Now that we have proven termination and uniqueness of normal forms, it
immediately follows that the term rewriting system is
confluent~\cite{D2005}.
\end{enumerate} 
\qed
\end{proof}

Example~\ref{ex:nf} illustrates the notion of 
canonical form for \SAND~attack trees.
}

\begin{example}
\label{ex:nf}
The canonical form of the \SAND~attack tree $t$ 
in Figure~\ref{fig:attack-tree} is the tree 
\[
t' = \OR\Big(\SAND\big(\ftprhostsx, \rshx, \localbofx\big), \\
 \AND\big(\sshbofx, \rsarefbofx\big)\Big)
\]
shown in Figure~\ref{fig:attack-tree2}. 
It 
is easily seen to be in normal form with respect to $\thetrs$.
\end{example}
\begin{figure}[h]
\centering
\includegraphics[height=3.4cm]{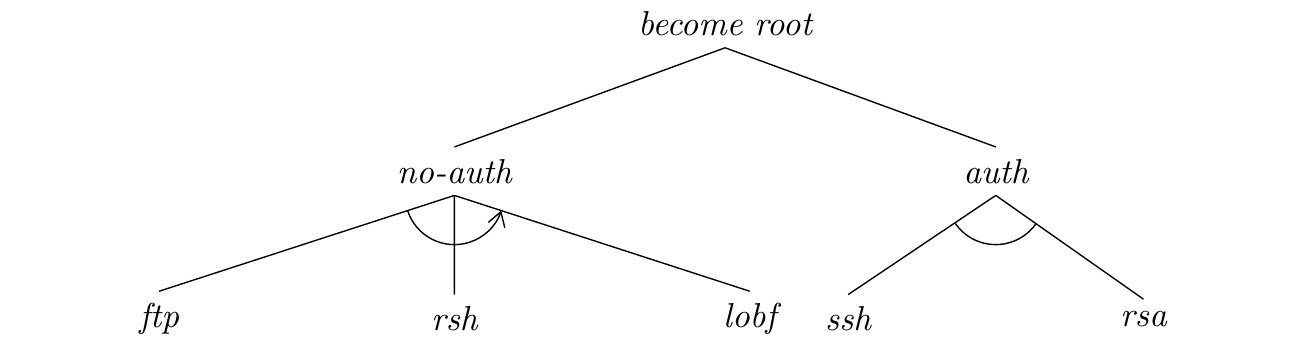}
\caption{\SAND\ attack tree $t'$ equivalent to \SAND\ attack tree $t$ from 
Figure~\ref{fig:attack-tree}}
\label{fig:attack-tree2}
\end{figure}

\subsection{SP Semantics as a Generalization of the Multiset Semantics}
\label{sec:SP_vs_M}

Having a complete set of axioms for the SP semantics
allows us to formalize the relation between 
\SAND\ attack trees under the SP semantics and 
attack trees under the multiset semantics, denoted by $\msem{\cdot}$. 
This is achieved by extracting 
a complete set of axioms for 
the multiset semantics for 
attack trees from 
the set $\ESP$. 
Let $\ESM$ be the subset of axioms from $\ESP$ that
do not contain the $\SAND$ operator, \ie, 
$\ESM=\{$\eqref{E1}, \eqref{E2}, \eqref{E3}, \eqref{E4}, \eqref{E5},
\eqref{E6}, \eqref{E10}, \eqref{E11}$\}$.
\begin{theorem}
\label{th:multiset-axioms}
The axiom system $\ESM$
is a complete set of axioms for the multiset
semantics for attack trees.
\end{theorem}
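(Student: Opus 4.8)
The plan is to obtain this result as a \emph{conservativity} consequence of the already-established completeness of $\ESP$ for the SP semantics (Theorem~\ref{th:axioms}), together with the observation recorded at the end of Section~\ref{sec:atree-sand-semantics} that on \SAND-free trees the SP and multiset semantics coincide. Identifying the multiset $\msl b_1,\dots,b_k\msr$ with the purely parallel SP graph $\xrightarrow{b_1}\parop\dots\parop\xrightarrow{b_k}$ yields, for every $t\in\atree$, a bijection between $\msem{t}$ and $\sem{t}$; consequently, for all $t_1,t_2\in\atree$ we have $\msem{t_1}=\msem{t_2}$ if and only if $\sem{t_1}=\sem{t_2}$. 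It therefore suffices to prove, for \SAND-free trees, that $\sem{t_1}=\sem{t_2}\iff \ESM\vdash t_1=t_2$.

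For soundness ($\Leftarrow$) I would argue directly: since $\ESM\subseteq\ESP$ and $\ESM$ is a set of equations over $\atreevar$, every $\ESM$-derivation is in particular an $\ESP$-derivation (its substitutions range over $\atreevar\subseteq\sandtreevar$). Hence $\ESM\vdash t_1=t_2$ implies $\ESP\vdash t_1=t_2$, and by the soundness direction of Theorem~\ref{th:axioms} we obtain $\sem{t_1}=\sem{t_2}$, as required.

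For completeness ($\Rightarrow$) the central point is that no detour through \SAND\ is needed to equate two \SAND-free trees. First I would observe that the three rewrite rules of $\thetrs$ obtained from the \SAND-axioms \eqref{E4'}, \eqref{E6'}, and \eqref{E10'} can fire only on a term containing a \SAND\ operator, and that none of the remaining rules ever \emph{introduces} a \SAND. Hence, restricted to the fragment $\atree$, the system $\thetrs$ coincides with the rewrite system $R_{\mathcal{M}}$ obtained by orienting exactly the axioms of $\ESM$, and the $\thetrs$-normal form of a \SAND-free tree is itself \SAND-free and reachable using only $R_{\mathcal{M}}$-steps. In terms of grammar~\eqref{grammar:NF}, such normal forms are precisely those in which the non-terminal $\sandterm$ produces only basic actions, i.e.\ sets of parallel compositions of basic actions, which under the identification above correspond exactly to sets of multisets. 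Now, given $\sem{t_1}=\sem{t_2}$, Theorem~\ref{th:trs} (applied to the \SAND-free fragment) and Lemma~\ref{lem:bijection} guarantee that $t_1$ and $t_2$ rewrite to one and the same normal form $\nfterm$; reading the $R_{\mathcal{M}}$-steps back as equations gives $\ESM\vdash t_1=\nfterm$ and $\ESM\vdash t_2=\nfterm$, whence $\ESM\vdash t_1=t_2$.

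The one point requiring care---and which I regard as the main obstacle---is exactly this claim that equality of \SAND-free trees is derivable \emph{within} $\ESM$: a priori the larger system $\ESP$ might prove $t_1=t_2$ only by passing through intermediate terms containing \SAND. The argument above closes this gap by reasoning about the shape of $\thetrs$ rather than about arbitrary $\ESP$-derivations, so that termination, confluence, and uniqueness of normal forms (Theorem~\ref{th:trs}) and the bijection of Lemma~\ref{lem:bijection} can all be invoked on the \SAND-free fragment directly.
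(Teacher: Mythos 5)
Your proof is correct, but it takes a genuinely different route from the paper's. The paper does not go through the SP semantics at all: it imports the completeness of the axiom system $\EADT$ for the multiset semantics of attack--defense trees from~\cite{KoMaRaSc_JLC}, and then argues purely syntactically that the extra ADTree axioms---those involving countermeasures and opponent's functions---mention these symbols on both sides of every equation, hence can never occur in a derivation equating two plain attack trees; what remains of $\EADT$ is exactly $\ESM$. You instead stay entirely inside the paper's own machinery: soundness of $\ESP$ (one direction of Theorem~\ref{th:axioms}), termination and the shape of normal forms (Theorem~\ref{th:trs}), injectivity of $\sem{\cdot}$ on normal forms (Lemma~\ref{lem:bijection}), and the embedding of multisets as purely parallel SP graphs, together with the observation---correct, and rightly identified by you as the crux---that the oriented axioms \eqref{E4'}, \eqref{E6'}, \eqref{E10'} have $\SAND$-headed left-hand sides while no other rule introduces a $\SAND$, so a $\SAND$-free tree normalizes entirely within the $\ESM$ fragment; this is precisely what rules out an $\ESP$-derivation that detours through $\SAND$. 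Two small points to tighten: normal forms are unique only modulo the commutativity axioms \eqref{E1} and \eqref{E2}, but since both belong to $\ESM$ your derivation still closes; and the coincidence of $\msem{\cdot}$ and $\sem{\cdot}$ on $\atree$ is asserted but not proved in the paper (end of Section~\ref{sec:atree-sand-semantics}), so a fully self-contained version of your argument owes a routine induction there. As for what each approach buys: the paper's proof is short and independent of the rewriting results, but rests on an external theorem about a richer formalism; yours is self-contained and non-circular (Theorems~\ref{th:axioms} and~\ref{th:trs} do not depend on Theorem~\ref{th:multiset-axioms}), and as a bonus it essentially delivers Theorem~\ref{th:SP_vs_M} (conservativity) for free, whereas the paper proves that theorem separately, on top of Theorem~\ref{th:multiset-axioms}, by inspecting arities of $\SAND$ occurrences in the axioms.
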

{
\begin{proof}
In \cite[Theorem 4.9]{KoMaRaSc_JLC}, a complete axiomatization of
the multiset semantics for an extention of attack trees 
called attack--defense trees (ADTrees) is given. In the following, we
call that axiomatization $\EADT$.
ADTrees are a superset of attack trees.
They may contain defender's nodes 
modeled by the so called
opponent's functions and countermeasures. 
We claim that $\ESM$
is a complete axiomatization of the multiset semantics for attack trees.
Obviously if two attack trees
are equal with respect to $\ESM$, then they
are also equal with respect to $\EADT$.
This is clear, because $\ESM\subset \EADT$.

Conversely, we prove that if two attack trees are equal with respect to 
$\EADT$, they are equal with respect to $\ESM$. This
follows from the following syntactical reasoning. 
$\EADT$ contains function symbols which we call
\emph{countermeasures}. 
Observe by inspecting the axioms of $\EADT$ that if a
countermeasure occurs at the left-hand side of an equation, then it also
occurs at the right-hand side, and vice versa. Therefore, axioms 
$(E_{13}),
(E_{16}), (E_{17}), (E_{18}), (E_{19}), (E_{20})$ from $\EADT$
can never be used in a derivation of equality of
two standard attack trees. Further, observe that the remaining 
axioms $(E_9)$ and $(E_{12})$ from $\EADT$ make use of 
\emph{opponent's functions}.
In these axioms, 
an opponent function occurs on the left-hand side if and only
if it occurs on the right hand side. Thus 
these axioms are never used to equate
two attack trees which do not contain opponent's nodes. 
The remaining axioms are precisely 
\eqref{E1}, \eqref{E2}, \eqref{E3}, \eqref{E4}, \eqref{E5},
\eqref{E6}, \eqref{E10}, \eqref{E11}.
So, we can only use these axioms 
to derive equalities
of attack trees with respect to $\EADT$, which implies that such a
derivation is also possible using axioms from $\ESM$.
\qed
\end{proof}
}

By comparing the complete sets of axioms 
$\ESP$ and $\ESM$ we obtain that two attack trees are equivalent under
the multiset semantics if and only if they are equivalent under the SP 
semantics. This is formalized in the following theorem. 
\begin{theorem}
\label{th:SP_vs_M}
$\SAND$ attack trees under the SP semantics are a \emph{conservative
extension} of attack trees under the multiset semantics.
\end{theorem}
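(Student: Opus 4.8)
The plan is to prove the biconditional that conservativity demands: for all standard (that is, $\SAND$-free) attack trees $t_1,t_2\in\atree$, we have $\sem{t_1}=\sem{t_2}$ if and only if $\msem{t_1}=\msem{t_2}$. Completeness of the two axiom systems, namely Theorem~\ref{th:axioms} for the SP semantics and Theorem~\ref{th:multiset-axioms} for the multiset semantics, lets me replace each semantic equivalence by its syntactic counterpart, so the whole statement reduces to the purely syntactic claim that, over equations between $\SAND$-free terms, $\ESP\vdash t_1=t_2$ if and only if $\ESM\vdash t_1=t_2$; in other words, that $\ESP$ is a conservative extension of $\ESM$. One implication is immediate: since $\ESM\subseteq\ESP$, every $\ESM$-derivation is already an $\ESP$-derivation, so $\ESM\vdash t_1=t_2$ yields $\ESP\vdash t_1=t_2$, which translated back gives $\msem{t_1}=\msem{t_2}\Rightarrow\sem{t_1}=\sem{t_2}$.

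The substance of the theorem is the converse, namely that the three $\SAND$-specific axioms \eqref{E4'}, \eqref{E6'} and \eqref{E10'} cannot be exploited to equate two $\SAND$-free trees. The first approach one would try is the inspection argument used in the proof of Theorem~\ref{th:multiset-axioms}: both \eqref{E4'} and \eqref{E10'} carry $\SAND$ on both sides, so they fit the ``symbol present on both sides'' pattern. The difficulty, and what I expect to be the main obstacle, is the asymmetric axiom \eqref{E6'}, $\SAND(A)=A$, which mentions $\SAND$ only on its left-hand side; a derivation could in principle introduce a $\SAND$ node and cancel it later, so a one-step symbol count does not close the gap.

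To sidestep this I would route the converse through the term rewriting system $\thetrs$ of Theorem~\ref{th:trs}, using its termination and confluence. The key lemma to establish is that $\thetrs$ \emph{preserves $\SAND$-freeness}: if $t$ is $\SAND$-free and $t\to_{\thetrs}t'$, then $t'$ is $\SAND$-free. This holds because the left-hand sides of the oriented forms of \eqref{E4'}, \eqref{E6'} and \eqref{E10'} all contain the symbol $\SAND$ and therefore cannot match any subterm of a $\SAND$-free term, while the remaining rules (the oriented \eqref{E3}, \eqref{E4}, \eqref{E5}, \eqref{E6}, \eqref{E10}, \eqref{E11}, taken modulo the commutativity axioms \eqref{E1} and \eqref{E2}) are $\SAND$-free on both sides and introduce no fresh variables on the right. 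Crucially, these surviving rules are exactly the oriented axioms of $\ESM$. Consequently the unique normal form $\normalform(t)$ of a $\SAND$-free tree $t$ is again $\SAND$-free and is reached using only $\ESM$-rules, whence $\ESM\vdash t=\normalform(t)$.

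The proof then concludes cleanly. If $\sem{t_1}=\sem{t_2}$ with $t_1,t_2\in\atree$, then, since rewriting preserves the SP semantics and normal forms are canonical representatives by Lemma~\ref{lem:bijection}, $t_1$ and $t_2$ share a common normal form $N$. The preservation lemma gives $\ESM\vdash t_1=N$ and $\ESM\vdash t_2=N$, hence $\ESM\vdash t_1=t_2$ by symmetry and transitivity, and finally $\msem{t_1}=\msem{t_2}$ by Theorem~\ref{th:multiset-axioms}. Combining this with the immediate implication of the first paragraph yields the equivalence $\sem{t_1}=\sem{t_2}\iff\msem{t_1}=\msem{t_2}$ for all standard attack trees, which is exactly the asserted conservativity.
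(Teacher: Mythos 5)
Your proof is correct, and for the hard direction it takes a genuinely different route from the paper's. After reducing conservativity to a syntactic statement via the two completeness theorems (exactly as you do, and as the paper does implicitly), the paper argues directly on equational derivations: it checks by inspection that every axiom preserves, in both directions, the presence of a $\SAND$ node of arity at least two, so that a derivation between standard trees can involve at most the unary $\SAND$s that \eqref{E6'} introduces and removes, and these yield no new equalities. You correctly identify \eqref{E6'} as the reason a naive ``$\SAND$ occurs on both sides'' inspection fails---a derivation could introduce a $\SAND$ and cancel it later---but instead of the paper's arity-tracking invariant you sidestep reasoning about arbitrary derivations altogether: you route through the rewriting system $\thetrs$, observing that every rule whose left-hand side mentions $\SAND$ cannot fire on a $\SAND$-free term (even modulo the commutativity axioms \eqref{E1} and \eqref{E2}, which never create a $\SAND$), so the normal form of a standard tree is $\SAND$-free and reached by $\ESM$-steps only; then termination and confluence (Theorem~\ref{th:trs}), the grammar characterization of normal forms, soundness of $\ESP$ for the SP semantics, and injectivity of $\sem{\cdot}$ on $\nfterms$ (Lemma~\ref{lem:bijection}) force two SP-equivalent standard trees to share a normal form $N$ with $\ESM\vdash t_i=N$. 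Your route buys rigor: it replaces the paper's somewhat informal argument about what derivations ``can introduce'' with a concrete canonical-form computation, at the price of invoking the full strength of Theorem~\ref{th:trs} and Lemma~\ref{lem:bijection}, whereas the paper's invariant argument is self-contained given the two completeness theorems. There is no circularity in your dependencies, since Theorem~\ref{th:trs} and Lemma~\ref{lem:bijection} are established independently of this theorem.
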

\begin{proof}
{
Let $t$ and $t'$ be standard attack trees.
Let $\msem{t}$ and $\msem{t'}$ be their interpretation in the 
multiset semantics and $\sem{t}$ and
$\sem{t'}$ be their interpretation in the SP semantics.
We prove that $\msem{t} = \msem{t'}$ if and only if $\sem{t} =
\sem{t'}$.

By Theorem~\ref{th:multiset-axioms}, a complete axiomatization of 
the multiset semantics for attack 
trees 
consists of axioms~\eqref{E1}, \eqref{E2}, \eqref{E3}, \eqref
{E4}, \eqref{E5}, \eqref{E6}, \eqref{E10}, \eqref{E11}.
The complete axiomatization of the SP semantic for 
$\SAND$ attack trees additionally 
contains axioms \eqref{E4'}, \eqref{E6'},
and \eqref{E10'}. Thus, every equivalence of attack
trees under the multiset semantics 
is clearly an equivalence of $\SAND$ attack trees
under the SP semantics.

To see the converse, 
we show that the additional axioms do not introduce new
equalities on standard attack trees. 
First inspect the three additional axioms and note that all of them
contain the $\SAND$ operator.

Next, observe that for all axioms, the set of variables occurring on the
left-hand side is equal to the set of variables occurring on the
right-hand side. Thus, there is no axiom eliminating all
occurrences of a variable. 
In particular, we claim that 
all axioms transform 
terms containing a $p$-ary $\SAND$ expression, where $p\geq 2$, into
terms containing a $q$-ary $\SAND$ expression, for some $q\geq 2$. 
This is evident for equations without the $\SAND$ operator (since no
variables are eliminated) and remains
to be shown for equations \eqref{E4'}, \eqref{E6'},
and \eqref{E10'}. 
Axiom~\eqref{E6'} introduces and removes unary $\SAND$, but does not
modify the single variable $A$ and therefore satisfies the claim. 
The arities of the two left-hand side $\SAND$ operators 
in equation~\eqref{E4'} are $l$ and $k+l+m$ and the arity of the 
right-hand side operator is $k+l+m$, where $k,m\geq 0$ and $l\geq 1$.
Since $1\leq l\leq k+l+m$ and both sides contain a $\SAND$ operator of
arity $k+l+m$, if either of the two sides
contains a $\SAND$ operator with two or more arguments, then so does
the other side.
Finally, since $l\geq 1$, 
the arity of the $\SAND$ operator on the left-hand side of
equation~\eqref{E10'} is equal to the arities of the $\SAND$ operators
on its right-hand side and at least one $\SAND$ operator
occurs on the right-hand side.

We can now show that none of the three axioms~\eqref{E4'}, \eqref{E6'}, \eqref
{E10'} introduces new equalities on
standard attack trees. 
In particular, axiom~\eqref{E6'} introduces and removes unary
$\SAND$, but this does not introduce new equalities on standard attack
trees. Equations~\eqref{E4'} and~\eqref{E10'} match unary $\SAND$, but
require a further $\SAND$ with $2$ or more arguments to add a new equality.
Since, by the above claim, no $p$-ary $\SAND$ for $p\geq 2$ 
can be introduced with any of
the equations, the additional equations do not introduce new
equalities on standard attack trees.
\qed
}

\end{proof}

{
}

\section{Attributes}
\label{sec:attributes}

Attack trees do not only serve to represent security scenarios 
in a graphical way. They can also be used to 
quantify such scenarios with respect to 
a given parameter, called an \emph{attribute}. 
Typical examples of attributes include
the likelihood that the attacker's goal is satisfied and 
the minimal time or cost of an attack. 
Schneier described~\cite{Schn} 
an intuitive bottom-up algorithm for calculating attribute values
on attack trees:
attribute values are assigned to the leaf nodes 
and two functions\footnote{These are actually 
families of functions representing infinitely many $k$-ary function symbols,
for all $k \geq 2$.} (one for the \OR\ and one for the \AND\ refinement)
are used to propagate the attribute value up to the root node. 
Mauw and Oostdijk showed~\cite{MaOo} that 
if the binary operations induced by the two functions 
define a semiring, 
then the 
evaluation of the attribute on two attack trees 
equivalent with respect to the multiset semantics 
yields the same value. This result has 
been generalized to any semantics and attribute that satisfy a notion of 
\emph{compatibility}~\cite{KoMaRaSc_JLC}. 
We briefly discuss it for \SAND~attack trees 
at the end of this section. 
We start with a demonstration on how 
the bottom-up evaluation 
algorithm 
can naturally be extended to \SAND\ attack trees.

An {\em attribute domain for an attribute $\fullattr$ on \SAND\ attack trees} 
is a tuple $\attrdomain_\attr = (\attrval_\attr, \attror_\attr, \attrand_\attr, \attrsand_\attr)$
where $\attrval_\attr$ is a set of values and 
 $\attror_\attr, \attrand_\attr, \attrsand_\attr$ are families of 
$k$-ary functions of the
 form $\attrval_\attr\times\dots\times\attrval_\attr\to\attrval_\attr$, 
associated to \OR, \AND, and \SAND\ refinements, respectively. 
An {\em attribute for \SAND~attack trees} is a pair 
$\fullattr = (\attrdomain_\attr, \basicassign_\attr)$ 
formed by an attribute domain $\attrdomain_\attr$ and a function 
$\basicassign_\attr:\basicact\to\attrval_\attr$,
called {\em basic assignment} for $\fullattr$, which associates a value from 
$\attrval_\attr$ with each basic action $b \in \basicact$. 
\begin{definition}
\label{def:attr}
Let $\fullattr = \big((\attrval_\attr, \attror_\attr, \attrand_\attr, \attrsand_
\attr), \basicassign_\attr\big)$
be an attribute. 
The attribute evaluation function 
$\attr: \sandtree \to \attrval_\attr$ which calculates the value of 
attribute 
$\fullattr$ 
for every \SAND \ attack tree $t \in \sandtree$
is defined recursively as follows
\[ \attr(t) = \left\{ 
 \begin{array}{l l}
 \basicassign_\attr(t) & \quad \text{if $t=b,\ b\in\basicact$}\\
 \attror_\attr\big(\attr(t_1), \dots, \attr(t_k)\big) & \quad \text{if $t=\OR(t_1, \dots, t_k)$}\\
 \attrand_\attr\big(\attr(t_1), \dots, \attr(t_k)\big) & \quad \text{if $t=\AND(t_1, \dots, t_k)$}\\
 \attrsand_\attr\big(\attr(t_1), \dots, \attr(t_k)\big) & \quad \text{if $t=\SAND(t_1, \dots, t_k)$}
 \end{array} \right.\]
\end{definition}

The following example illustrates the bottom-up evaluation
of the attribute \emph{minimal attack time} on the \SAND~attack trees 
given in Example~\ref{eg:def-attack-tree}.
\begin{example}
\label{ex:min_cost}
Let $\attr$ denote the minimal time 
that the attacker needs to achieve her goal.
We make the following assignments to the basic actions: 
$\ftprhostsx\mapsto 3$, $\rshx\mapsto 5$, $\localbofx\mapsto 7$, $\sshbofx\mapsto 
8$, $\rsarefbofx\mapsto 9$. 
Since we are interested in the minimal attack time, 
the function for an \OR\ node 
is defined by 
$\attror_\attr(x_1,\dots,x_k)=\min\{x_1,\dots, x_k\}$. 
The function for an \AND\ node 
is $\attrand_\attr(x_1,\dots,x_k)=\max\{x_1,\dots, x_k\}$, 
which models that the children 
of a conjunctively refined node are executed in parallel.
Finally, in order to model that the children 
of a \SAND\ node need to be executed 
sequentially, 
we let 
$\attrsand_\attr(x_1,\dots,x_k)=\sum_{i=1}^{k} x_i$. 
According to Definition~\ref{def:attr}, 
the minimal attack time for our running scenario $t$ is
\[\attror_\attr\Big(\attrsand_\attr\big(\attrsand_\attr(3,5),7\big),
\attrand_\attr(8, 9)\Big)=
\min\Big({\mathrm\Sigma}\big({\mathrm\Sigma}(3,5),7\big), \max(8, 9)\Big)=9.\]
\end{example}

In the case of standard attack trees, 
the bottom-up procedure uses only two functions to propagate 
the attribute values to the root -- one for conjunctive and one for 
disjunctive nodes. This means that the same function is employed 
to calculate the value of every conjunctively refined node, 
independently of whether its children need to be executed sequentially or can 
be executed simultaneously. 
Evidently, with \SAND~attack trees, we can apply different propagation 
functions for \AND\ and \SAND\ nodes, as in 
Example~\ref{ex:min_cost}. 
Therefore, \SAND~attack trees can be evaluated over a larger set of 
attributes, and hence may provide more accurate evaluations of 
attack 
scenarios than 
standard attack trees.

To guarantee that the evaluation of an attribute on equivalent 
attack trees yields
the same value, 
the attribute domain must be 
\emph{compatible} with a considered semantics~\cite{KoMaRaSc_JLC}. 
Our complete set of axioms is a useful tool 
to check for compatibility. 
Consider an attribute domain
$\attrdomain_\attr = (\attrval_\attr, \attror_\attr, \attrand_\attr, \attrsand_\attr)$, 
and let $\sigma$ be a mapping
$\sigma=\{\OR\mapsto \attror_\attr, \AND \mapsto \attrand_\attr, 
\SAND \mapsto \attrsand_\attr\}$. 
Guaranteeing that $\attrdomain_\attr$ is compatible with 
a semantics axiomatized by $E$ 
amounts to verifying that the equality 
$\sigma(l)=\sigma(r)$ holds in $\attrval_\attr$, for every axiom
$l=r\in E$.
It is an easy exercise to show that the attribute domain 
for minimal attack time, considered in Example~\ref{ex:min_cost},
is compatible with the SP semantics for \SAND\ attack trees.

\section{Conclusions}
\label{sec:conclusions}

We have formalized the extension of attack trees 
with sequential conjunctive refinement, 
called \SAND, 
and given a semantics to \SAND~attack trees in terms of
sets of series-parallel graphs.
This SP semantics 
naturally extends the multiset semantics 
for attack trees from~\cite{MaOo}.
We have shown that the notion of 
a complete set of axioms for a semantics and 
the bottom-up evaluation procedure can be generalized 
from attack trees to \SAND~attack trees, 
and have proposed a complete axiomatization of the SP semantics.

A number of recently proposed solutions
focus on extending attack trees with defensive
measures~\cite{RoKiTr2,KoMaRaSc_JLC}.
These extensions support reasoning about security scenarios
involving two players -- an attacker and a defender --
and the interaction between them.
In future work, we intend to add
the $\SAND$ refinement to such trees.
Afterwards, we plan to investigate
sequential disjunctive refinement, as used
for instance in~\cite{Arnold-POST14}. Our goal is to propose
a complete formalization
of trees with attack and defense nodes,
that have parallel and sequential, conjunctive and disjunctive
refinements. 
The findings will be implemented in the software 
application ADTool~\cite{adtool}.

\paragraph{\textbf{Acknowledgments}} The research leading
to these results has received funding from the European Union Seventh Framework
Programme under grant agreement number 318003 (TREsPASS) and from
the Fonds National de la Recherche Luxembourg under grant 
C13/IS/5809105.


\begin{thebibliography}{10}
\providecommand{\url}[1]{\texttt{#1}}
\providecommand{\urlprefix}{URL }

\bibitem{ArBySh}
Ardi, S., Byers, D., Shahmehri, N.: {Towards a structured unified process for
 software security}. In: SESS'06. pp. 3--10. ACM (2006)

\bibitem{Arnold-POST14}
Arnold, F., Hermanns, H., Pulungan, R., Stoelinga, M.: {Time-Dependent Analysis
 of Attacks}. In: Abadi, M., Kremer, S. (eds.) POST'14. LNCS, vol. 8414, pp.
 285--305. Springer (2014)

\bibitem{BP2003}
Brooke, P., Paige, R.: {Fault Trees for Security System Design and Analysis}.
 Computers \& Security 22(3), 256--264 (2003)

\bibitem{ByArShDu}
Byers, D., Ardi, S., Shahmehri, N., Duma, C.: {Modeling software
 vulnerabilities with vulnerability cause graphs}. In: ICSM'06. pp. 411--422
 (2006)

\bibitem{AY2006}
Camtepe, S., Yener, B.: {Modeling and Detection of complex Attacks}. In:
 {SecureComm'07}. pp. 234--243. IEEE (2007)

\bibitem{D2005}
Dershowitz, N.: {Review: Term Rewriting Systems by Terese}. {Theory Pract. Log.
 Program.} 5(3), 395--399 (2005)

\bibitem{Gischer1988199}
Gischer, J.L.: {The Equational Theory of Pomsets}. {Theor. C. Sc.} 61,
 199--224 (1988)

\bibitem{IFIP-SEC'15}
Jhawar, R., Kordy, B., Mauw, S., Radomirovi\'c, S., Trujillo-Rasua, R.: {Attack
 Trees with Sequential Conjunction}. In: {Proceedings of the International
 Conference on ICT Systems Security and Privacy Protection (IFIP SEC 2015)}.
 Springer (2015), (to appear).

\bibitem{JW2007}
J\"{u}rgenson, A., Willemson, J.: {Processing Multi-Parameter Attacktrees with
 Estimated Parameter Values}. In: Miyaji, A., Kikuchi, H., Rannenberg, K.
 (eds.) IWSEC. LNCS, vol. 4752, pp. 308--319. Springer (2007)

\bibitem{Khan}
Khand, P.A.: {System level security modeling using attack trees}. In: IC4'09.
 pp. 1--6 (2009)

\bibitem{KBV2001}
Klop, J.W., Bezem, M., Vrijer, R.C.D. (eds.): {Term Rewriting Systems}.
 Cambridge University Press, New York, NY, USA (2001)

\bibitem{adtool}
Kordy, B., Kordy, P., Mauw, S., Schweitzer, P.: {ADTool: Security Analysis with
 Attack--Defense Trees}. In: Joshi, K.R., Siegle, M., Stoelinga, M.,
 D'Argenio, P.R. (eds.) QEST'13. LNCS, vol. 8054, pp. 173--176. Springer
 (2013)

\bibitem{KoMaRaSc_JLC}
Kordy, B., Mauw, S., Radomirovi\'{c}, S., Schweitzer, P.: {Attack--Defense
 Trees}. Journal of Logic and Computation 24(1), 55--87 (2014)

\bibitem{KoPiSc_CSR14}
Kordy, B., Pi\`{e}tre-Cambac\'{e}d\`{e}s, L., Schweitzer, P.: {DAG-Based Attack
 and Defense Modeling: Don't Miss the Forest for the Attack Trees}. Computer
 Science Review 13--14(0), 1--38 (2014)

\bibitem{KoPoSc}
Kordy, B., Pouly, M., Schweitzer, P.: {Computational Aspects of Attack--Defense
 Trees}. In: Bouvry, P., Klopotek, M.A., Lepr{\'{e}}vost, F., Marciniak, M.,
 Mykowiecka, A., Rybinski, H. (eds.) S\&IIS'11. LNCS, vol. 7053, pp. 103--116.
 Springer (2011)

\bibitem{KoPoSc_iFM'14}
Kordy, B., Pouly, M., Schweitzer, P.: {A Probabilistic Framework for Security
 Scenarios with Dependent Actions}. In: Albert, E., Sekerinski, E. (eds.)
 iFM'14. LNCS, vol. 8739. Springer (2014)

\bibitem{LM2005}
Liu, Y., Man, H.: {Network vulnerability assessment using Bayesian networks}.
 In: Proceedings of {SPIE} Data Mining, Intrusion Detection, Information
 Assurance, and Data Networks Security 2005. vol. 5812, pp. 61--71 (2005)

\bibitem{MaOo}
Mauw, S., Oostdijk, M.: {Foundations of Attack Trees}. In: Won, D., Kim, S.
 (eds.) ICISC'05. LNCS, vol. 3935, pp. 186--198. Springer (2006)

\bibitem{OBM2006}
Ou, X., Boyer, W.F., McQueen, M.A.: {A scalable approach to attack graph
 generation}. In: {CCS'06}. pp. 336--345 (2006)

\bibitem{PJM2008}
Peine, H., Jawurek, M., Mandel, S.: {Security Goal Indicator Trees: A Model of
 Software Features that Supports Efficient Security Inspection}. In:
 {HASE'08}. pp. 9--18. IEEE Computer Society (2008)

\bibitem{PB2010}
Pi\`{e}tre-Cambac\'{e}d\`{e}s, L., Bouissou, M.: {Beyond Attack Trees: Dynamic
 Security Modeling with Boolean Logic Driven Markov Processes (BDMP)}. In:
 {EDCC'10}. pp. 199--208. IEEE Computer Society, Los Alamitos, CA, USA (2010)

\bibitem{pinchinat:hal-01064645}
Pinchinat, S., Acher, M., Vojtisek, D.: {Towards Synthesis of Attack Trees for
 Supporting Computer-Aided Risk Analysis}. In: Software Engineering and Formal
 Methods. LNCS, vol. 8938, pp. 363--375. Springer (2014)

\bibitem{QL2004}
Qin, X., Lee, W.: {Attack Plan Recognition and Prediction using Causal
 Networks}. In: ACSAC'04. pp. 370--379 (2004)

\bibitem{QiLe}
Qin, X., Lee, W.: Attack plan recognition and prediction using causal networks.
 In: 20th Annual Computer Security Applications Conference. pp. 370--379
 (2004)

\bibitem{RoKiTr2}
Roy, A., Kim, D.S., Trivedi, K.S.: {Attack Countermeasure Trees (ACT): towards
 unifying the constructs of attack and defense trees}. Security and
 Communication Networks 5(8), 929--943 (2012)

\bibitem{Schn}
Schneier, B.: {Attack Trees: Modeling Security Threats}. {Dr. Dobb's Journal of
 Software Tools} 24(12), 21--29 (1999)

\bibitem{SHJLW2002}
Sheyner, O., Haines, J., Jha, S., Lippmann, R., Wing, J.M.: {Automated
 generation and analysis of attack graphs}. In: {S\&P'02}. pp. 273--284. IEEE
 (2002)

\bibitem{WaWhPhPa}
Wang, J., Whitley, J.N., Phan, R.C.W., Parish, D.J.: {Unified Parametrizable
 Attack Tree}. Int. Journal for Information Security Research 1(1), 20--26
 (2011)

\bibitem{LL2011}
Wen-ping, L., Wei-min, L.: {Space Based Information System Security Risk
 Evaluation Based on Improved Attack Trees}. In: (MINES'11). pp. 480--483
 (2011)

\bibitem{JW2009}
Willemson, J., J\"{u}rgenson, A.: {Serial Model for Attack Tree Computations}.
 In: Lee, D., Hong, S. (eds.) ICISC'09. LNCS, vol. 5984, pp. 118--128.
 Springer (2010)

\end{thebibliography}
\end{document}